\newcommand{\Ex}{\mathbb{E}}                                                   %
\newcommand{\Gauss}{\mathcal{N}}                                      %
\newcommand{\Unif}{\mathcal{U}}                                      %
\DeclareMathOperator\erf{erf}
\newcommand{\real}{\mathbb{R}}
\definecolor{nutsColor}{rgb}{0.9941,    0.7754,    0.6172}%
\definecolor{tmdrColor}{rgb}{0.5645,    0.7207,    0.8535}%
\definecolor{dramColor}{rgb}{0.84766,0.27734,0.00391}%
\newcommand{\jointPlotHeight}{5cm}
\newcommand{\jointPlotWidth}{5cm}
\newcommand{\rdvs}{x}               			 %
\newcommand{\tdvs}{z}               			 %
\newcommand{\tdv}{\boldsymbol{\tdvs}}                %
\newcommand{\rdv}{\boldsymbol{\rdvs}}                %
\newcommand{\tdva}{\tdv^*}                %
\newcommand{\rdva}{\rdv^*}                %
\newcommand{\tdvsa}{\tdvs^*}                %
\newcommand{\rdvsa}{\rdvs^*}                %
\newcommand{\trv}{\boldsymbol{Z}}                %
\newcommand{\sri}{\rdv_i}                %
\newcommand{\sti}{\tdv_i}                %
\newcommand{\pd}{n}                %
\newcommand{\data}{ {\boldsymbol{D}} }               %
\newcommand{\datai}{ \boldsymbol{d} }               %
\newcommand{\param}{\Theta}     %
\newcommand{\paramt}{\boldsymbol{\param}}    %
\newcommand{\ntd}{\pi}            %
\newcommand{\td}{\bar{\pi}}                  %
\newcommand{\rd}{\eta}                          %
\newcommand{\tmeas}{\mu_{\rm tar}}        %
\newcommand{\rmeas}{\mu_{\rm ref}}        %
\newcommand{\emap}{T}                          %
\newcommand{\demap}{{\nabla T}}                  %
\newcommand{\imap}{{\widetilde{T}}}       %
\newcommand{\dimap}{{\nabla \widetilde{T} }} %
\newcommand{\efmap}{S}                          %
\newcommand{\defmap}{{\nabla S}}                  %
\newcommand{\ifmap}{{\widetilde{S}}}       %
\newcommand{\mapp}{\gamma}                %
\newcommand{\Dkl}{\mathcal{D}_{\rm KL}}         %
\newcommand{\spaceMap}{\mathcal{T}}             %
\newcommand{\spaceMapT}{\mathcal{T}_\triangle}  %
\newcommand{\tmeasNew}{\tmeas} %
\newcommand{\rmeasNew}{\rmeas}  %
\newcommand{\ntda}{\widetilde{\pi}}                  %
\newcommand{\constrrefae}{, \quad(\rd-\rm{a.e.})  }  
\newcommand{\constrtarae}{, \quad(\ntd-\rm{a.e.})  }  
\newcommand{\pull}{^\sharp}  
\newcommand{\dimtot}{n} %
\newcommand{\rvtot}{\trv}  %
\newcommand{\frv}{\boldsymbol{D}}   %
\newcommand{\rdvf}{ \rdv_{\frv}}  
\newcommand{\rdvse}{ \rdv_{\srv}}  
\newcommand{\fdv}{{\boldsymbol{d}} } 
\newcommand{\fdvs}{d} 
\newcommand{\srv}{\boldsymbol{\Theta}}  %
\newcommand{\sdv}{{\boldsymbol{\theta}}}
\newcommand{\sdvs}{ \theta }  
\newcommand{\dimfrv}{n_{\fdvs}}  
\newcommand{\dimsrv}{n_{\sdvs}}
\newcommand{\jointfs}{\ntd_{\frv,\srv}}
\newcommand{\condsgf}{\ntd_{\srv\vert\frv}}
\newcommand{\margf}{\ntd_{\frv}}
\newcommand{\condsgfd}{\ntd_{\srv\vert\frv=\fdv}}
\newcommand{\rdsrv}{\rd_{\srv}}
\newcommand{\rdfrv}{\rd_{\frv}}
\newcommand{\sv}{\theta}  
\newcommand{\dimfv}{n_d}  
\newcommand{\dimsv}{n_\sv}  
\newcommand{\mapcf}{\emap^{\frv}}  
\newcommand{\mapcs}{\emap^{\srv}}  
\newcommand{\wb}{\boldsymbol{w}}  
\newcommand{\dvpl}{\wb}  
\newcommand{\solsystem}{\rdv_{\fdv}^\star}
\begin{document}

\title*{An introduction to sampling via measure transport}
\titlerunning{Sampling via measure transport}

\author{Youssef Marzouk, Tarek Moselhy, Matthew Parno, and Alessio Spantini}
\authorrunning{Marzouk, Moselhy, Parno, and Spantini}

\institute{Youssef Marzouk \at Massachusetts Institute of Technology, 77 Massachusetts Ave, Cambridge, MA, USA; \email{ymarz@mit.edu}
\and Tarek Moselhy \at D.\ E.\ Shaw Group, 1166 Avenue of the Americas, New York, NY, USA \email{tmoselhy@mit.edu}
\and Matthew Parno \at Massachusetts Institute of Technology, 77 Massachusetts Ave, Cambridge, MA, USA; \email{mparno@mit.edu}
\and Alessio Spantini \at Massachusetts Institute of Technology, 77 Massachusetts Ave, Cambridge, MA, USA; \email{spantini@mit.edu}}
\maketitle

\abstract{
We present the fundamentals of a measure transport approach to sampling. The idea is to construct a deterministic coupling---i.e., a transport map---between a complex ``target'' probability measure of interest and a simpler reference measure. Given a transport map, one can generate arbitrarily many independent and unweighted samples from the target simply by pushing forward reference samples through the map. If the map is endowed with a triangular structure, one can also easily generate samples from conditionals of the target measure. We consider two different and complementary scenarios: first, when only evaluations of the unnormalized target density are available, and second, when the target distribution is known only through a finite collection of samples. We show that in both settings the desired transports can be characterized as the solutions of variational problems.
We then address practical issues associated with the optimization--based construction of transports: choosing finite-dimensional parameterizations of the map, enforcing monotonicity, quantifying the error of approximate transports, and refining approximate transports by enriching the corresponding approximation spaces. %
Approximate transports can also be used to ``Gaussianize'' complex distributions and thus precondition conventional asymptotically exact sampling schemes. %
We place the measure transport approach in broader context, describing connections with other optimization--based samplers, with inference and density estimation schemes using optimal transport, and with alternative transformation--based approaches to simulation. We also sketch current work aimed at the construction of transport maps in high dimensions, exploiting essential features of the target distribution (e.g., conditional independence, low-rank structure). The approaches and algorithms presented here have direct applications to Bayesian computation and to broader problems of stochastic simulation.
\keywords{Measure transport, optimal transport, Knothe--Rosenblatt map, Monte Carlo methods, Bayesian inference, approximate Bayesian computation, density estimation, convex optimization}
}

\section{Introduction}
\label{sec:Intro}

Characterizing complex probability distributions is a fundamental and ubiquitous task in uncertainty quantification. In this context, the notion of ``complexity'' encompasses many possible challenges: non-Gaussian features, strong correlations and nonlinear dependencies, high dimensionality, the high computational cost of evaluating the (unnormalized) probability density associated with the distribution, or even intractability of the probability density altogether. 
Typically one wishes to characterize a distribution by evaluating its moments, or by computing the probability of an event of interest.
These goals can be cast as the computation of \emph{expectations} under the distribution, e.g., the computation of 
$\mathbb{E}[g(\trv)]$ where $g$ is some measurable function and $\trv$ is the random variable whose distribution we wish to characterize. 

The workhorse algorithms in this setting are sampling or ``simulation'' methods, of which the most broadly useful are Markov chain Monte Carlo (MCMC) \cite{Gilks1996,RobertBook2004,Brooks2011} or sequential Monte Carlo (SMC) \cite{smith2001sequential,Liu2004,delMoralSMC2006} approaches. 
{Direct} sampling from the distribution of interest---i.e., generating independent and unweighted samples---is typically impossible.
However, MCMC and SMC methods generate samples that can nonetheless be used to compute the desired expectations. In the case of MCMC, these samples are correlated, while in the case of importance sampling or SMC, the samples are endowed with weights. Non-zero correlations or non-uniform weights are in a sense the price to be paid for flexibility---for these approaches' ability to characterize arbitrary probability distributions. But if the correlations between successive MCMC samples decay too slowly, or if importance sampling weights become too non-uniform and the sample population thus degenerates, all these approaches become extremely inefficient. Accordingly, enormous efforts have been devoted to the design of improved MCMC and SMC samplers---schemes that generate more nearly independent or unweighted samples. While these efforts are too diverse to summarize easily, they often rest on the design of improved (and structure-exploiting) proposal mechanisms within the algorithms \cite{Haario2001,Andrieu2006,delMoralSMC2006,chorin2009implicit,Neal2011,Martin2012,Girolami2011,CuiDILI2016}.

As an alternative to the sampling approaches described above, we will consider \emph{transformations} of random variables, or perhaps more abstractly, \emph{transport maps} between probability measures. Let $\tmeas: \mathcal{B}(\mathbb{R}^n) \rightarrow \mathbb{R}_+$ be a probability measure that we wish to characterize, defined over the Borel
$\sigma$-algebra on $\mathbb{R}^n$, and let $\rmeas: \mathcal{B}(\mathbb{R}^n) \rightarrow \mathbb{R}_+$ be another probability measure from which we can easily generate independent and unweighted samples, e.g., a standard Gaussian. 
Then a transport map $T: \mathbb{R}^n \rightarrow \mathbb{R}^n$ \emph{pushes forward} $\rmeas$ to $\tmeas$ if and only if $\tmeas(A) = \rmeas(T^{-1}(A))$ for any set $A \in  \mathcal{B}(\mathbb{R}^n)$. We can write this compactly as 
\begin{equation}
T_\sharp \rmeas = \tmeas. \label{eq:measconst}
\end{equation}
In simpler terms, imagine generating samples $\sri \in \real^n$
that are distributed according to $\rmeas$ and then applying $T$ to each of these samples. Then the transformed samples 
$T(\sri)$ are distributed according to $\tmeas$. 

Setting aside for a moment questions of how to find such a transformation $T$ and what its properties might be, consider the significance of having $T$ in hand. First of all, given $T$ and the ability to sample directly from $\rmeas$, one can generate independent and unweighted samples from $\tmeas$ and from any of its marginal distributions. Moreover one can generate these samples cheaply, regardless of the cost of evaluating the probability density associated with $\tmeas$; with a map $T$ in hand, no further appeals to $\tmeas$ are needed. A transport map can also be used to devise deterministic sampling approaches, i.e., quadratures for non-standard measures $\tmeas$, based on quadratures for the reference measure $\rmeas$. Going further, if the transport map is endowed with an appropriate structure, it can enable direct simulation from particular conditionals of $\tmeas$. (We will describe this last point in more detail later.)

The potential to accomplish all of these tasks using measure transport is the launching point for this chapter. We will present a variational approach to the construction of transport maps, i.e., characterizing the desired maps as the solutions of particular optimization problems. We will also discuss 
the parameterization of transport maps
---a challenging task since maps are, in general, high-dimensional multivariate functions, which ultimately must be approximated in finite-dimensional spaces. Because maps are sought via optimization, standard tools for assessing convergence can be used. In particular, it will be useful to quantify the error incurred when the map is not exact, i.e., when we have only $T_\sharp \rmeas \approx \tmeas$, and to develop strategies for refining the map parameterization in order to reduce this error. More broadly, it will be useful to understand how the structure of the transport map (e.g., sparsity and other low-dimensional features) depends on the properties of the target distribution and how this structure can be exploited to construct and represent maps more efficiently. 

In discussing these issues, we will focus on two classes of map construction problems: (P1) constructing transport maps given the ability to evaluate only the unnormalized probability density of the target distribution, and (P2) constructing transport maps given only \textit{samples} from a distribution of interest, but no explicit density. These problems are frequently motivated by Bayesian inference, though their applicability is more general. In the Bayesian context, the first problem corresponds to the typical setup where one can evaluate the unnormalized density of the Bayesian posterior.  The second problem can arise when one has samples from the joint distribution of parameters and observations, and wishes to condition the former on a particular realization of the latter. This situation is related to approximate Bayesian computation (ABC) \cite{beaumont2002approximate,marin2012approximate}, and here our ultimate goal is conditional simulation. 

This chapter will present the basic formulations employed in the transport map framework and illustrate them with simple numerical examples. First, in Section~\ref{sec:transport}, we will recall foundational notions in measure transport and \emph{optimal transportation}. Then we will present variational formulations corresponding to Problems P1 and P2 above. In particular, Section~\ref{sec:DensConst} will explore the details of Problem P1: constructing transport maps from density evaluations. Section~\ref{sec:SampConst} will explore Problem P2: sample-based map construction. Section \ref{sec:MapParam} then discusses useful finite-dimensional parameterizations of transport maps. After presenting these formulations, we will describe connections between the transport map framework and other work in Section~\ref{sec:relatedwork}. In Section~\ref{sec:CondSamp} we describe how to simulate from certain conditionals of the target measure using a map and in Section \ref{sec:SampUses} we illustrate the framework on a Bayesian inference problem, including some comparisons with MCMC. Because this area is rapidly developing, this chapter will not attempt to capture all of the latest efforts and extensions. Instead, we will provide the reader with pointers to current work in Section \ref{sec:Conc}, along with a summary of important open issues.

\section{Transport maps and optimal transport}
\label{sec:transport}

A transport map $T$ satisfying \eqref{eq:measconst} can be understood as a deterministic coupling of two probability measures, and it is natural to ask under what conditions on $\rmeas$ and $\tmeas$ such a map exists. Consider, for instance, the case where $\rmeas$ has as an atom but $\tmeas$ does not; then there is no deterministic map that can push forward $\rmeas$ to $\tmeas$, since the probability contained in the atom cannot be split. Fortunately, the conditions for the existence of a map are quite weak---for instance, that $\rmeas$ be atomless \cite{Villani2009}. Henceforth we will assume that both the reference measure $\rmeas$ and the target measure $\tmeas$ are absolutely continuous with respect to the Lebesgue measure on $\mathbb{R}^n$, thus assuring the existence of transport maps satisfying \eqref{eq:measconst}.

There may be infinitely many such transformations, however. One way of choosing a particular map is to introduce a transport cost $c: \real^\pd\times\real^\pd\rightarrow \real$
such that  $c(\rdv,\tdv)$ represents the ``work'' needed to move a unit of mass from $\rdv$ to $\tdv$. 
The resulting cost of a particular map is then 
\begin{equation}
C(\emap) = \int_{\real^\pd} c \left (\rdv,\emap(\rdv) \right) \, {\rm d}\rmeas(\rdv). \label{eq:mapcost}
\end{equation}
Minimizing \eqref{eq:mapcost} while simultaneously satisfying \eqref{eq:measconst} corresponds to a problem first posed by Monge \cite{Monge1781} in 1781. The solution of this constrained minimization probem is the \emph{optimal} transport map.  Numerous properties of optimal transport have been studied in the centuries since. Of particular interest is the result of \cite{Brenier1991}, later extended by \cite{McCann1995}, which shows that when $c(\rdv,\emap(\rdv))$ is quadratic and
 $\rmeas$ is atomless, the optimal transport map exists and is unique; moreover this map is the gradient of a convex function and thus is monotone.  Generalizations of this result accounting for different cost functions and spaces can be found in \cite{Champion2011, Ambrosio2013, Feyel2004, Bernard2004}.  For a thorough contemporary development of optimal transport we refer to \cite{Villani2009,Villani2003}. 
The structure of the optimal transport map follows not only from the target and reference measures but also from the cost function in \eqref{eq:mapcost}.  For example, the quadratic cost of \cite{Brenier1991} and \cite{McCann1995} leads to maps that are in general dense, i.e., with each output of the map depending on every input to the map. However, if the cost is taken to be
\begin{equation}
  c(\rdv, \tdv) = \sum_{i=1}^{\pd} t^{i-1} | \rdvs_i- \tdvs_i |^2, \ t > 0,
  \label{eq:rosencost}
\end{equation}
then \cite{Carlier2010} and \cite{Bonnotte2013} show that the optimal map becomes lower triangular as $t\rightarrow 0$.  Lower triangular maps take the form
\begin{equation}  
\emap(\rdv) = \left[\begin{array}{l}
\emap^1(\rdvs_1)\\ 
\emap^2(\rdvs_1,\rdvs_2)\\ 
\vdots \\ 
\emap^\pd(\rdvs_1,\rdvs_2,\dots\rdvs_\pd)
\end{array}\right]\qquad \forall \rdv=(\rdvs_1,\ldots,\rdvs_\pd)\in\real^{\pd},
\label{eq:lowerTri}
\end{equation}
where $\emap^i$ represents output $i$ of the map.
 Importantly, when $\tmeas$ and $\rmeas$ are absolutely continuous, a unique lower triangular map satisfying \eqref{eq:measconst} exists; this map is exactly the Knothe-Rosenblatt rearrangement \cite{Rosenblatt1952, Carlier2010, Bonnotte2013}.

The numerical computation of the optimal transport map, for generic measures on $\mathbb{R}^n$, is a challenging task that is often restricted to very low dimensions 
\cite{benamou2000computational,angenent2003minimizing,loeper2005numerical,haber2010efficient}.
Fortunately, in the context of stochastic simulation and Bayesian inference, we are not particularly concerned with the optimality aspect of the transport; we just need to push forward the reference measure to the target measure. Thus we will focus on transports that are easy to compute, but that do not necessarily satisfy an optimality criterion based on transport cost. Triangular transports will thus be of particular interest to us. The triangular structure will make constructing transport maps feasible 
(see Sections \ref{sec:DensConst} and \ref{sec:SampConst}), 
conditional sampling straightforward (see Section \ref{sec:CondSamp}), 
and map inversion efficient (see Section \ref{sec:SampMap:inverse}). 
Accordingly, we will require the transport map to be (lower) triangular and search for a transformation that satisfies \eqref{eq:measconst}. The optimization problems arising from this formulation are described in the next two sections.

\section{Direct transport: constructing maps from unnormalized densities}
\label{sec:DensConst}

In this section we show how to construct a transport map that pushes forward a reference measure to the target measure when only evaluations of the \textit{unnormalized target density} are available. This is a central task in Bayesian inference, where the target is the posterior measure.

\subsection{Preliminaries}

We assume that both target and reference measures are absolutely continuous with respect to the Lebesgue measure on $\real^\pd$. Let $\ntd$ and $\rd$ be, respectively,  the normalized target and reference densities with respect to the Lebesgue measure. 
In what follows and for the sake of simplicity, we assume that both  $\ntd$ and $\rd$ are smooth strictly positive functions on their support. We seek a diffeomorphism $T$ (a smooth function with smooth inverse) that pushes forward the reference to the target measure,
\begin{equation} \label{eq:equalityMeasure}
 	   \tmeasNew = \rmeasNew \circ T^{-1},
\end{equation}
where $\circ$ denotes the composition of functions. In terms of densities, we will rewrite \eqref{eq:equalityMeasure} as $T_\sharp \rd = \ntd  $ over the support of the reference density. $T_\sharp \rd$ is the pushforward  of the reference density under the map $T$, and it is defined as:
\begin{equation}
  T_\sharp \,\rd  \coloneqq  \rd \circ T^{-1} \,  \vert \det \demap^{-1} \vert \, ,
\end{equation}
where $\demap^{-1}$ denotes the Jacobian of the inverse of the map. (Recall that the Jacobian determinant  $\det \demap^{-1}$ is equal to $1 /  \left ( \det \demap \circ \emap^{-1} \right ) $.) 
As noted in the introduction, if $(\sri)_i$ are independent samples from $\rd$, then $( T(\sri) )_i$ are independent samples from  $T_\sharp \rd$. 
(Here and throughout the chapter, we use the notation $(\sri)_i$ as a shorthand for $(\sri)_{i=1}^M$  to denote a collection $(\rdv_1,\ldots,\rdv_M)$ whenever the definition of the cardinality $M$ is either unimportant or possibly infinite.)  Hence, if we find a transport map $T$  that satisfies $T_\sharp \rd = \ntd  $, then $( T(\sri) )_i$ will be independent samples from the target distribution.
In particular,  the change of variables formula:
\begin{equation}
	\int g(\rdv) \,\ntd(\rdv) \, {\rm d}\rdv = \int [g\circ T](\rdv) \,\rd(\rdv) \, {\rm d}\rdv 
\end{equation}
holds for any integrable real-valued function $g$ on $\mathbb{R}^n$ \cite{adams1996measure}. The map therefore allows for direct computation of posterior expectations.

\subsection{Optimization problems}

Now we describe a series of optimization problems whose solution yields the desired transport map. Let $\Dkl (\, \pi_1 \, \Vert \, \pi_2) $ denote the Kullback-Leibler (K--L) divergence from a probability measure with density $\pi_1$ to a probability measure with density $\pi_2$, i.e.,  
$$
\Dkl (\, \pi_1 \, || \, \pi_2 \, ) = \mathbb{E}_{\pi_1} \left ( \log \frac{\pi_1}{\pi_2} \right ) \, ,
$$
and let $\spaceMap$ be an appropriate set of diffeomorphisms. %
Then, any 
global 
minimizer of the optimization problem:
\begin{eqnarray}  \label{abstractOptimDirect}
			& {\rm min}     &  \Dkl (\, T_\sharp \rd \,||\, \ntd \,)  \\
			& {\rm s.t.}      &    \det \demap > 0\constrrefae    \nonumber  \\
	        &  				    & T \in \spaceMap       \nonumber 
\end{eqnarray}
is a valid  transport map that pushes forward the reference to the target measure \footnote{See \cite{Moselhy2011} for a discussion on the asymptotic equivalence of the K--L divergence and Hellinger distance in the context of transport maps.}.
In fact, any 
global 
minimizer of \eqref{abstractOptimDirect} achieves the minimum cost $\Dkl (\, T_\sharp \rd \,||\, \ntd \,)=0$ and implies that $T_\sharp \rd = \ntd$.
The constraint $\det \demap > 0 $ ensures that the pushforward density $T_\sharp \rd$ is strictly positive on the support of the target.
In particular, the constraint $\det \demap > 0 $ ensures that the K--L divergence evaluates to finite values over $\spaceMap$ and does not rule out any useful transport map since we assume that both target and reference densities are positive.
The existence of global minimizers of \eqref{abstractOptimDirect} is a standard result in the theory of deterministic couplings between random variables \cite{Villani2009}.

Among these minimizers, a particularly useful map is given by the Knothe-Rosenblatt rearrangement \cite{Carlier2010}.
In our hypothesis, the Knothe-Rosenblatt rearrangement is a \textit{triangular} (in the sense that the $k$th component of the map depends only on the first $k$ input variables) diffeomorphism $T$ such that $\demap \succ 0$.  That is, each eigenvalue of $\demap$ is real and positive. Thus it holds that $\det \demap >0$.
Notice that for a triangular map the eigenvalues of $\demap$ are just the diagonal  entries of this matrix.
The Knothe-Rosenblatt rearrangement is also monotone increasing according to the
lexicographic order on $\real^\pd$ \footnote{The lexicographic order on $\real^\pd$ is defined as follows. For $x,y \in \real^\pd$, we define $x \preceq y$ if and only if either $x = y$ or the first nonzero coordinate in $y-x$ is positive \cite{ghorpade2010course}. $\preceq$ is a total order  on $\real^\pd$. Thus, we  define $T$ to be a monotone increasing function if and only if  $x \preceq y$ implies $T(x) \preceq T(y)$. Notice that monotonicity can be defined with respect to any order on $\real^\pd$ (e.g., $\preceq$ need not be the lexicographic order). 
There is no natural order on $\real^\pd$ except when $\pd=1$. It is easy to verify that for a triangular function $T$, monotonicity with respect to the lexicographic order is equivalent to the following: the $k$th component of $T$ is a monotone function of the $k$th input variable.}.

It turns out that we can further constrain \eqref{abstractOptimDirect} so that the
Knothe-Rosenblatt rearrangement is the \textit{unique} 
global 
minimizer of: 
\begin{eqnarray}  \label{abstractOptimDirectRosenblatt}
			& {\rm min}     &  \Dkl (\, T_\sharp \rd \,||\, \ntd \,)  \\
			& {\rm s.t.}      &    \demap \succ 0\constrrefae       \nonumber  \\
	        &  				    & T \in \spaceMapT       \nonumber 
\end{eqnarray}
where $\spaceMapT$  is now the vector space of smooth triangular maps. 
The constraint $\demap \succ 0$ suffices to enforce invertibility of a feasible triangular map.
\eqref{abstractOptimDirectRosenblatt} is a far better behaved optimization problem than the original formulation \eqref{abstractOptimDirect}. 
Hence, for the rest of this section we will focus on the computation of a  Knothe-Rosenblatt rearrangement by solving \eqref{abstractOptimDirectRosenblatt}.
Recall that our goal is just to compute a transport map from $\rmeasNew$ to $\tmeasNew$. If there are multiple transports, we can opt for the easiest one to compute. 
A possible drawback of a triangular transport is that the complexity of a parameterization of the map depends on the ordering of the input variables. This dependence motivates questions of what is the ``best'' ordering, or how to find at least a ``good'' ordering. 
We refer the reader to \cite{spantini16markov} for an in-depth discussion of this topic. For the computation of general non-triangular transports see \cite{Moselhy2011}; for a generalization of the framework to compositions of maps see \cite{Moselhy2011,Parno2014thesis}; and for the computation of optimal transports see, for instance, \cite{Villani2009,benamou2000computational,angenent2003minimizing,loeper2005numerical}.

Now let $\td$ denote any \textit{unnormalized} version of the target density. For any map $T$ in the feasible set of \eqref{abstractOptimDirectRosenblatt}, the objective function can be written as:
\begin{eqnarray} \label{KLdivergence}
	\Dkl (\, T_\sharp \rd \,||\, \ntd \,) & = & \Dkl (\, \rd \,||\, T^{-1}_\sharp \ntd \,) \\
	                                           &  = & \Ex_{\rd} [ - \log \td \circ T - \log \det \demap   ] \,  +\, \mathfrak{C} ,\nonumber  
\end{eqnarray}
where $\Ex_{\rd}[\cdot]$ denotes integration with respect to the reference measure, and  $\mathfrak{C}$ is a term independent of the transport map and thus a constant for the purposes of optimization. (In this case, $\mathfrak{C}=\log \beta + \log \eta$, where $\beta \coloneqq \td/\ntd$ is the
normalizing constant of the target density.) The resulting optimization problem reads as:
\begin{eqnarray}  \label{OptimDirect}
			& {\rm min}     &  \Ex_{\rd} [ - \log \td \circ T - \log \det \demap   ]   \\
			& {\rm s.t.}      &    \demap \succ 0\constrrefae       \nonumber  \\
	        &  				    & T \in \spaceMapT       \nonumber 
\end{eqnarray}
Notice that we can evaluate the objective of \eqref{OptimDirect} given only the {unnormalized} density $\td$ and a way to compute the
integral $\Ex_{\rd}[\cdot]$.  There exist a host of techniques to approximate the integral with respect to the reference measure, including quadrature and cubature formulas, sparse quadratures, Monte Carlo methods, and quasi-Monte Carlo (QMC) methods. The choice between these methods is typically dictated by the dimension of the reference space. In any case, the reference measure is usually chosen so that the integral with respect to $\rd$ can be approximated easily and accurately. 
For instance, if $\rmeasNew$ is a standard Gaussian measure, then we can generate arbitrarily many independent samples to yield an approximation of  $\Ex_{\rd}[\cdot]$ to any desired accuracy. This will be a crucial difference relative to the sample--based  construction of the map described in  Section~\ref{sec:SampConst},  where samples from the target distribution are required to accurately solve the corresponding optimization problem.

\eqref{OptimDirect} is a linearly constrained nonlinear differentiable optimization problem. It is also non-convex unless, for instance, the target density is log-concave \cite{Kim2013}. That said, many statistical models have log-concave posterior distributions and hence yield convex map optimization problems; consider, for example, a log-Gaussian Cox process \cite{Girolami2011}. All $n$ components of the map have to be computed simultaneously, and each evaluation of the objective function of \eqref{OptimDirect} requires an evaluation of the unnormalized target density. The latter is also the minimum requirement for alternative sampling techniques such as MCMC. Of course, the use of derivatives in the context of the optimization problem is crucial for computational efficiency, especially for high-dimensional parameterizations of the map. The same can be said of the state-of-the-art MCMC algorithms that use gradient or Hessian information from the log-target density to yield better proposal distributions, such as Langevin or Hamiltonian MCMC \cite{Girolami2011}. In the present context, we advocate the use of quasi-Newton (e.g., BFGS) or Newton methods \cite{wright1999numerical}
 to solve \eqref{OptimDirect}. These methods must be paired with a finite-dimensional parameterization of the map; in other words, we must solve \eqref{OptimDirect}  over a finite-dimensional space  $\spaceMapT^h \subset \spaceMapT$ of triangular diffeomorphisms. In Section \ref{sec:MapParam} we will discuss various choices for  $\spaceMapT^h$ along with ways of enforcing the monotonicity constraint $\demap \succ 0$.

\subsection{Convergence, bias, and approximate maps}
\label{s:bias}
A transport map provides a deterministic solution to the problem of sampling from a given unnormalized density, avoiding classical stochastic tools such as MCMC.  Once the transport map is computed, we can quickly generate independent and unweighted samples from the target distribution without further evaluating the target density \cite{Moselhy2011}.  This is a major difference with respect to MCMC.  Of course, the density--based transport map framework essentially exchanges a challenging sampling task for a challenging optimization problem involving function approximation. Yet there are several advantages to dealing with an optimization problem rather than a sampling problem. Not only can we rely on a rich literature of robust algorithms for the solution of high-dimensional nonlinear optimization problems, but we also inherit the notion of convergence criteria. The latter point is crucial. 

A major concern in MCMC sampling methods is the lack of clear and generally applicable convergence criteria. It is a non-trivial task to assess the stationarity of an ergodic Markov chain, let alone to measure the distance between the empirical measure given by the MCMC samples and the target distribution \cite{GorhamMa15}. In the transport map framework, on the other hand, the convergence criterion is borrowed directly from standard optimization theory \cite{luenberger1968optimization}. As shown in \cite{Moselhy2011}, the K--L divergence $\Dkl (\, T_\sharp \rd \,||\, \ntd \,)$ can be estimated as:
	\begin{equation} \label{varianceConvergence}
	 \Dkl (\, T_\sharp \rd \,||\, \ntd \,) \approx \frac{1}{2} \,\mathbb{V}\mathrm{ar}_{\rd}[ \log \rd - \log T^{-1}_\sharp \td  ]
	\end{equation}	 
up to second-order terms in the limit of $\mathbb{V}\mathrm{ar}_{\rd}[ \log \rd - \log T^{-1}_\sharp \td  ]\rightarrow 0$,  even if the normalizing constant of the target density is unknown. (Notice that \eqref{varianceConvergence} contains only the unnormalized target density $\td$.) Thus one can monitor \eqref{varianceConvergence} to estimate the divergence between the pushforward of a given map and the desired target distribution. Moreover, the transport map algorithm also provides an estimate of the normalizing constant $\beta \coloneqq \td/\ntd$ of the target density as  \cite{Moselhy2011}:
\begin{equation}
	\beta = \exp \Ex_{\rd} [ \log \rd - \log T^{-1}_\sharp \td ] \, .
\label{eq:normconst}
\end{equation}
The normalizing constant is a particularly useful quantity in the context of Bayesian model selection \cite{Gelman2003}. Reliably retrieving this normalizing constant from MCMC samples requires additional effort (e.g.,  \cite{gelman1998simulating,chib2001marginal}). 
The numerical solution of \eqref{OptimDirect} entails at least two different approximations. First, the infinite-dimensional function space $\spaceMapT$ must be replaced with a finite dimensional subspace $\spaceMapT^h \subset \spaceMapT$.  For example, each component of the map can be approximated in a total-degree polynomial space, as discussed in 
Section \ref{sec:MapParam}. Let $h$ parameterize a sequence of possibly nested finite-dimensional approximation spaces $(\spaceMapT^h)_h$. Then, as the dimension of $\spaceMapT^h$ grows, we can represent increasingly complex maps.
Second, the expectation with respect to the reference measure in the objective of \eqref{OptimDirect} must also be approximated. As discussed earlier, one may take any of several approaches. As a concrete example, consider approximating $\Ex_{\rd}[\cdot]$ by a Monte Carlo sum with $M$ independent samples $(\sri)_i$ from the reference measure. Clearly, as the cardinality of the sample set grows, the approximation of $\Ex_{\rd}[\cdot]$ becomes increasingly accurate. An instance of an approximation of \eqref{OptimDirect} is then:
\begin{eqnarray}  \label{OptimDirectApprox}
			& {\rm min}     &  \frac{1}{M} \sum_{i=1}^M \,\left( - \log \td( T(\sri) ) - 
			\sum_{k=1}^\pd \log\, \partial_k T^k(\sri) \right)  \\
			& {\rm s.t.}      &   \partial_k T^k > 0,  \ \  
			k=1,\ldots,\pd\constrrefae     \nonumber  \\
	        &  				    & T \in \spaceMapT^h \subset \spaceMapT      \nonumber 
\end{eqnarray}
where we have simplified the monotonicity constraint  $\demap \succ 0$ by using the fact that $\demap$ is lower triangular for maps in $\spaceMapT$. Above we require that the monotonicity constraint be satisfied over the whole support of the reference density. We will discuss ways of strictly guaranteeing such a property in Section~\ref{sec:MapParam}. Depending on the parameterization of the map, however, the monotonicity constraint is sometimes relaxed (e.g., in the tails of $\rd$); doing so comprises a third source of approximation. 

Given a fixed sample set, \eqref{OptimDirectApprox} is a sample-average approximation (SAA) 
\cite{Kleywegt2002} 
of \eqref{OptimDirect}, to which we can apply standard numerical optimization techniques \cite{wright1999numerical}. 
Alternatively, one can regard \eqref{OptimDirect} as a stochastic program and solve it using stochastic approximation techniques \cite{Kushner2003, spall2005introduction}. In either case, the transport map framework allows efficient \textit{global} exploration of the parameter space via optimization. The exploration is global since with a transport map $\emap$ 
we are essentially trying to push forward the entire collection of reference samples $(\sri)_i$ to samples $(T(\sri))_i$ that fit the entire target distribution. Additionally, we can interpret the finite-dimensional parameterization of a candidate transport as a constraint on the relative motion of the pushforward particles $(T(\sri))_i$.

The discretization of the integral $\Ex_{\rd}[\cdot]$ in \eqref{OptimDirectApprox} reveals another important distinction of the transport map framework from MCMC methods. At every optimization iteration, we need to evaluate the target density $M$ times. If we want an accurate approximation of $\Ex_{\rd}[\cdot]$, then $M$ can be large. But these $M$ evaluations of the target density can be performed in an embarrassingly \textit{parallel} manner. %
This is a fundamental difference from standard MCMC, where the evaluations of the target density are inherently sequential. (For exceptions to this paradigm, see, e.g., \cite{calderhead2014general}.)

A minimizer of \eqref{OptimDirectApprox} is an approximate transport map $\imap$; this map can be written as $\imap(\,\cdot\,;\,M,h)$ to reflect dependence on the approximation parameters $(M,h)$ defined above. Ideally, we would like the approximate map $\imap$ to be as close as possible to the true minimizer $T \in \spaceMapT$ of \eqref{OptimDirect}.  Yet it is also important to understand the potential of an approximate map alone.
If $\imap$ is not the exact transport map, then $\imap_\sharp \rd$ will not be the target density. The pushforward density $\imap_\sharp \rd$ instead defines an approximate target: $\ntda \coloneqq \imap_\sharp \rd$. We can easily sample from $\ntda$ by pushing forward reference samples through $\imap$. If we are interested in estimating integrals of the form $\Ex_{\ntd}[g]$ for some integrable function $g$, then we can try to use Monte Carlo estimators of  $\Ex_{\ntda}[g]$ to approximate $\Ex_{\ntd}[g]$. This procedure will result in a biased estimator for $\Ex_{\ntd}[g]$ since $\ntda$ is not the target density. It turns out that this bias can be bounded as:
\begin{equation} \label{biasBound}
 \left \Vert  \Ex_{\ntd}[g] - \Ex_{\ntda}[g]  \right \Vert \le \, \mathcal{C}(g,\ntd,\ntda) \,\,\sqrt{ \Dkl (\, T_\sharp \rd \,||\,  \ntd \,) }
\end{equation}
where $\mathcal{C}(g,\ntd,\ntda) \coloneqq \sqrt{2} \left ( \Ex_{\ntd}[\|g\|^2] + \Ex_{\ntda}[\|g\|^2] \right )^{\frac{1}{2}}$.  The proof of this result is in \cite[Lemma 6.37]{stuart2010inverse} together with a similar result for the approximation of the second moments. Note that the K--L divergence on the right-hand side of \eqref{biasBound} is exactly the quantity we minimize in \eqref{OptimDirect} during the computation of a transport map, and it can easily be estimated using \eqref{varianceConvergence}.  Thus the transport map framework allows a systematic control of the bias resulting from estimation of $\Ex_{\ntd}[g]$ by means of an approximate map $\imap$. 

In practice, the mean-square error in approximating $\Ex_{\ntd}[g]$ using $\Ex_{\ntda}[g]$ will be entirely due to the bias described in \eqref{biasBound}. The reason is that a Monte Carlo estimator of $\Ex_{\ntda}[g]$ can be constructed to have virtually no variance: one can cheaply generate an arbitrary number of independent and unweighted samples from $\ntda$ using the approximate map. Hence the approximate transport map yields an essentially zero-variance biased estimator of the quantity of interest $\Ex_{\ntd}[g]$.  This property should be contrasted with MCMC methods which, while asymptotically unbiased, yield estimators of $\Ex_{\ntd}[g]$  that have nonzero variance and bias for finite sample size.

If one is not content with the bias associated with an approximate map, then there are at least two ways to proceed. First, one can simply refine the approximation parameters $(M,h)$ to improve the current approximation of the transport map. On the other hand, it is straightforward to apply any classical sampling technique (e.g., MCMC, importance sampling) to the \textit{pullback} density $\imap\pull\ntd = \imap^{-1}_\sharp \ntd$. This density is defined as
\begin{equation}
 \imap\pull \ntd \coloneqq \ntd\circ \imap \, \vert \det \dimap \vert \, ,
\label{eq:pullbackdensity}
\end{equation}
and can be evaluated (up to a normalizing constant) at no significant additional cost compared to the original target density $\ntd$. If $(\sri)_i$ are samples from the pullback 
$\imap\pull \ntd$, then $(\imap(\sri))_i$ will be samples from the \textit{exact} target distribution. But there are clear advantages to sampling $\imap\pull \ntd$ instead of the original target distribution. 
Consider the following: if $\imap$ were the exact transport map, then 
$\imap\pull \ntd$ would simply be the reference density. 
With an approximate map, we still expect $\imap\pull \ntd$ to be close to the reference density---more precisely, closer to the reference (in the sense of K--L divergence) than was the original target distribution. In particular, when the reference is a standard Gaussian, the pullback will be closer to a standard Gaussian than the original target. Pulling back through an approximate map thus ``Gaussianizes'' the target, and can remove the correlations and nonlinear dependencies that make sampling a challenging task. In this sense, we can interpret an approximate map as a general \textit{preconditioner} for any known sampling scheme. See \cite{Parno2015mcmc} for a full development of this idea in the context of MCMC.

There is clearly a trade-off between the computational cost associated with constructing a more accurate transport map and the costs of ``correcting'' an approximate map by applying an exact sampling scheme to the pullback. Focusing on the former, it is natural to ask how to refine the finite-dimensional approximation space $\spaceMapT^h$ so that it can better capture the true map $T$. Depending on the problem at hand, a na\"{i}ve finite-dimensional parameterization of the map might require a very large number of degrees of freedom before reaching a satisfactory approximation. This is particularly true when the parameter dimension $n$ is large, and is a challenge shared by any function approximation algorithm (e.g., high-dimensional regression). We will revisit this issue in Section~\ref{sec:Conc}, but the essential way forward is to realize that the structure of the target distribution is reflected in the structure of the transport map. For instance, conditional independence in the target distribution yields certain kinds of \textit{sparsity} in the transport, which can be exploited when solving the optimization problems above. Many Bayesian inference problems also contain low-rank structure that causes the map to depart from the identity only on low-dimensional subspace of $\mathbb{R}^n$. From the optimization perspective, adaptivity can also be driven via a systematic analysis of the first variation of the K--L divergence $ \Dkl (\, T_\sharp \rd \,||\, \ntd \,)$ as a function of $T\in\spaceMap$.

\section{Inverse transport: constructing maps from samples}
\label{sec:SampConst}

In the previous section we focused on the computation of a transport
map that pushes forward a reference measure to a target measure, in
settings where the target density can be evaluated up to a normalizing
constant and where it is simple to approximate integrals with respect
to the reference measure (e.g., using quadrature, Monte Carlo, or
QMC). In many problems of interest, including density destimation
\cite{Tabak2013} and approximate Bayesian computation, however, it is
not possible to evaluate the unnormalized target density $\td$.

In this section we assume that the target density is unknown and that
we are only given a finite number of samples distributed according to
the target measure.  We show that under these hypotheses it is
possible to efficiently compute an \emph{inverse transport}---a
transport map that pushes forward the target to the reference
measure---via convex optimization.  The direct transport---a transport
map that pushes forward the reference to the target measure---can then
be easily recovered by inverting the inverse transport map pointwise,
taking advantage of the map's triangular structure.

\subsection{Optimization problem}
We denote the inverse transport by
$\efmap:\real^n \rightarrow \real^n$ and again assume that the
reference and target measures are absolutely continuous with respect
to the Lebesgue measure on $\real^n$, with smooth and positive
densities.  The inverse transport pushes forward the target to the
reference measure:
\begin{equation}
	\rmeas = \tmeas\circ \efmap^{-1}.
\end{equation}
We focus on the inverse triangular transport because it can be
computed via convex optimization given samples from the target
distribution. It is easy to see that the monotone increasing
Knothe-Rosenblatt rearrangement that pushes forward $\tmeas$ to
$\rmeas$ is the unique minimizer of
\begin{eqnarray}  \label{abstractOptimInverseRosenblatt}
			& {\rm min}     &  \Dkl (\, \efmap_\sharp \ntd \,||\, \rd \,)  \\
			& {\rm s.t.}      &    \defmap \succ 0\constrtarae      \nonumber  \\
	        &  				    &  \efmap \in \spaceMapT       \nonumber 
\end{eqnarray}
where $\spaceMapT$ is the space of smooth  triangular maps. If $\efmap$ is a minimizer of \eqref{abstractOptimInverseRosenblatt}, then 
$ \Dkl (\, \efmap_\sharp \ntd \,||\, \rd \,)=0$ and thus $ \efmap_\sharp \ntd = \rd$.
For any map $\efmap$ in the feasible set of \eqref{abstractOptimInverseRosenblatt}, the
objective function can be written as:
\begin{eqnarray} \label{KLdivergenceInverse}
	\Dkl (\, \efmap_\sharp \ntd \,||\, \rd \,) 
	& = & \Dkl (\,  \ntd \,||\, \efmap^{-1}_\sharp \rd \,) \\
	&  = & \Ex_{\ntd} [ - \log \rd \circ \efmap - \log \det \defmap    ]  + \mathfrak{C}
	\nonumber  
\end{eqnarray}
where $\Ex_{\ntd}[\cdot]$ denotes integration with respect to the
target measure, and where 
$\mathfrak{C}$ is once again a term independent of the transport map and thus a constant for the purposes of optimization.
The resulting optimization problem is a stochastic program given by:
\begin{eqnarray}  \label{OptimInverse}
			& {\rm min}     &  \Ex_{\ntd} [ - \log \rd \circ \efmap - \log \det \defmap   ]   \\
			& {\rm s.t.}      &    \defmap \succ 0\constrtarae      \nonumber  \\
	        &  				    & \efmap \in \spaceMapT       \nonumber 
\end{eqnarray}
Notice that \eqref{OptimInverse} is equivalent to \eqref{OptimDirect}
if we interchange the roles of target and reference densities.
Indeed, the K--L divergence is not a symmetric function.  The
direction of the K--L divergence, i.e.,
$\Dkl (\, \efmap_\sharp \ntd \,||\, \rd \,)$ versus
$\Dkl (\, T_\sharp \rd \,||\, \ntd \,)$, is one of the key
distinctions between the sample-based map construction 
presented in this section and the density-based construction of
Section \ref{sec:DensConst}.  The choice of direction in
\eqref{KLdivergenceInverse} involves integration over the target
distribution, as in the objective function of \eqref{OptimInverse},
which we approximate using the given samples.
Let $(\sti)_{i=1}^M$ be $M$ samples from the target distribution.
Then, a sample-average approximation (SAA) \cite{Kleywegt2002} of
\eqref{OptimInverse} is given by:
\begin{eqnarray}  \label{OptimInverseSAA}
			& {\rm min}     &  \frac{1}{M}\sum_{i=1}^M
			  - \log \rd(\efmap(\sti)) - \log \det \defmap (\sti)   \\
			& {\rm s.t.}      &    
			\partial_k \efmap^k > 0 \qquad k=1,\ldots,n\constrtarae      \nonumber  \\
	        &  				    & \efmap \in \spaceMapT       \nonumber 
\end{eqnarray}
where we use the lower triangular structure of $\defmap$ to rewrite
the monotonicity constraint, $ \defmap \succ 0$, as a sequence of
essentially one dimensional monotonicity constraints:
$\partial_k \efmap^k > 0$  for $k=1,\ldots,n$.  We
note, in passing, that the monotonicity constraint can be
satisfied automatically by using monotone parameterizations of the triangular
transport (see Section \ref{s:mapMonotone}).  One can certainly use stochastic
programming techniques to solve \eqref{OptimInverse} depending on the
availability of target samples (e.g., stochastic approximation
\cite{Kushner2003,spall2005introduction}).  SAA, on the other hand, turns
\eqref{OptimInverse} into a deterministic optimization problem and
does not require generating new samples from the target distribution,
which could involve running additional expensive simulations or performing
new experiments.  Thus, SAA is generally the method of choice to solve
\eqref{OptimInverse}.  However, stochastic approximation may be better
suited for applications involving streaming data or massive sample
sets requiring single pass algorithms.

\subsection{Convexity and separability of the optimization problem}
\label{subsec:convex}
Note that \eqref{OptimInverseSAA} is a convex optimization problem
as long as the reference density is log--concave \cite{Kim2013}. Since
the reference density is a degree of freedom of the problem, it can
always be chosen to be log--concave.  Thus, \eqref{OptimInverseSAA}
can be a convex optimization problem \emph{regardless} of the
particular structure of the target.  This is a major difference from
the density-based construction of the map in Section
\ref{sec:DensConst}, where the corresponding optimization problem
\eqref{OptimDirect} is convex only under certain conditions on the
target (e.g., that the target density be log--concave).

For smooth reference densities, the objective function of
\eqref{OptimInverseSAA} is also smooth. Moreover, its gradients do not
involve derivatives of the log--target density---which might require
expensive adjoint calculations when the target density contains a PDE
model.  Indeed, the objective function of \eqref{OptimInverseSAA} does
not contain the target density at all! This feature should be contrasted
with the density-based construction of the map, where the objective
function of \eqref{OptimDirect} depends explicitly on the log-target
density.  Moreover, if the reference density can be written as the
product of its marginals, then \eqref{OptimInverseSAA} is a separable
optimization problem, i.e., each component of the inverse transport
can be computed independently and in parallel.

As a concrete example, let the reference measure be standard
Gaussian. In this case, \eqref{OptimInverseSAA} can be written as
\begin{eqnarray}  \label{OptimInverseSAAstd}
			& {\rm min}     &  \frac{1}{M}\,\sum_{i=1}^M \, \sum_{k=1}^n \,
			 \left[  \frac{1}{2} \,(\efmap^k)^2(\sti) - \log  \partial_k \efmap^k(\sti) 
			 \right]   \\
			& {\rm s.t.}      &    
			\partial_k \efmap^k > 0 \qquad k=1,\ldots,n\constrtarae      \nonumber  \\
	        &  				    & \efmap \in \spaceMapT       \nonumber 
\end{eqnarray}
where we use the identity 
$\log \det \defmap \equiv \sum_{k=1}^n \, \log  \partial_k \efmap^k$,
which holds for triangular maps.
Almost magically, the objective function and the constraining set
of \eqref{OptimInverseSAAstd} are \textit{separable}: the $k$th component
of the inverse transport can be computed as the solution of a single convex 
optimization problem,
\begin{eqnarray}  \label{OptimInverseSAAstdComp}
			& {\rm min}     &  \frac{1}{M}\,\sum_{i=1}^M \,
			 \frac{1}{2} \,(\efmap^k)^2(\sti) - \log  \partial_k \efmap^k(\sti) 
			 \\
			& {\rm s.t.}      &    
			\partial_k \efmap^k > 0\constrtarae      \nonumber  \\
	        &  				    & \efmap^k \in \spaceMap_k       \nonumber 
\end{eqnarray}
where $\spaceMap_k$ denotes the space of smooth real-valued functions of $k$ variables.
Most importantly, \eqref{OptimInverseSAAstdComp} depends only on the $k$th
component of the map.
Thus, all the components of the inverse transport can be computed independently
and in parallel by solving optimization problems of the form \eqref{OptimInverseSAAstdComp}.
This is another major difference from the density-based construction of the 
map, where all the components of the transport must be computed 
simultaneously as a solution of \eqref{OptimDirect} (unless, for instance,
the target density can be written as the product of its marginals).

As in the previous section, the numerical solution of
\eqref{OptimInverseSAAstdComp} requires replacing the
infinite-dimensional function space $\spaceMap_k$ with a finite
dimensional subspace $\spaceMap_k^h \subset \spaceMap_k$. The
monotonicity constraint $\partial_k \efmap^k > 0$ can be discretized
and enforced at only finitely many points in the parameter space, for
instance at the samples $(\sti)_{i=1}^M$ (see Section \ref{sec:MapParam}).
However, a better approach is to use monotone parameterizations of the
triangular transport in order to turn \eqref{OptimInverseSAAstdComp}
into an unconstrained optimization problem (see Section
\ref{sec:MapParam}).  In either case, the solution of
\eqref{OptimInverseSAAstdComp} over a finite-dimensional approximation
space $\spaceMap_k^h$ yields a component of the approximate inverse
transport.  The quality of this approximation is a function of at
least two parameters of the problem: the structure of the space
$\spaceMap_k^h$ and the number of target samples $M$.  While enriching
the space $\spaceMap_k^h$ is often a straightforward task, increasing the number
of target samples can be nontrivial, especially when exact sampling
from the target density is impossible.
This is an important difference from density-based
construction of the map, wherein the objective function of
\eqref{OptimDirect} only requires integration with respect to the
reference measure and thus can be approximated to any desired degree
of accuracy during each stage of the computation.  Of course, in many
cases of interest, exact sampling from the target distribution is
possible. Consider, for instance, the joint density of data and
parameters in a typical Bayesian inference problem
\cite{Parno2014thesis} or the forecast distribution in a filtering
problem where the forward model is a stochastic difference equation.

Quantifying the quality of an approximate inverse transport is an important issue.
If the target density can be evaluated up to a normalizing constant, then the K--L
divergence between the pushforward of the target through the map and the reference density
can be estimated as
\begin{equation} \label{eq:convergenceInverse}
	 \Dkl (\, \efmap_\sharp \ntd \,||\, \rd \,) \approx \frac{1}{2} \,
	 \mathbb{V}\mathrm{ar}_{\ntd}[\, \log \td - \log \efmap\pull \rd \, ]
\end{equation}	 
up to second order terms in the limit of
$\mathbb{V}\mathrm{ar}_{\ntd}[\, \log \td - \log \efmap\pull \rd ]\rightarrow
0$.
This expression is analogous to \eqref{varianceConvergence} (see
Section \ref{sec:DensConst} for further details on this topic).  When
the target density cannot be evaluated, however, one can rely on
statistical tests to monitor convergence to the exact inverse
transport.  For instance, if the reference density is a standard
Gaussian, then we know that pushing forward the target samples
$(\sti)_i$ through the inverse transport should yield jointly Gaussian
samples, with independent and standard normal components. If the inverse
transport is only approximate, then the pushforward samples will not
have independent and standard normal components, and one can quantify
their deviation from such normality using standard statistical tests
\cite{thode2002testing}.

\subsection{Computing the inverse map}\label{sec:SampMap:inverse}
Up to now, we have shown how to compute the triangular inverse
transport $\efmap$ via convex optimization given samples from the
target density. In many problems of interest, however, the goal is to
evaluate the direct transport $\emap$, i.e., a map that pushes forward
the reference to the target measure.  Clearly, the following
relationship between the direct and inverse transports holds:
\begin{equation} \label{eq:relationDirInv}
		\emap(\rdv)=\efmap^{-1}(\rdv), \qquad \forall \rdv\in \real^n.	
\end{equation}
Thus, if we want to evaluate the direct transport at a particular 
$\rdva \in \real^n$, 
i.e., $\tdva \coloneqq \emap(\rdva)$, then by \eqref{eq:relationDirInv} we can simply invert
$\efmap$ at $\rdva$ to obtain $\tdva$. 
In particular, if $\rdva=(\rdvsa_1,\ldots,\rdvsa_n)$ and $\tdva=(\tdvsa_1,\ldots,\tdvsa_n)$,
then $\tdva$ is a solution of the following lower triangular system of equations:
\begin{equation}
 \efmap(\tdva)	= 
    \left[ 
 	\begin{array}{l}
 	\efmap^1(\tdvsa_1)\\
 	\efmap^2(\tdvsa_1,\tdvsa_2)\\
 	\vdots\\
 	\efmap^n(\tdvsa_1,\tdvsa_2,\ldots,\tdvsa_n)
 	\end{array}
 	\right]	
 	=
    \left[ 
 	\begin{array}{l}
 	\rdvsa_1\\
 	\rdvsa_2\\
 	\vdots\\
 	\rdvsa_n
 	\end{array}
 	\right]	
 	= 	
 	\rdva
\end{equation}
where the $k$th component of $S$ is just a function of the first $k$ input
variables. This system is in general nonlinear, but we can devise a simple recursion in $k$ to compute each component of $\tdva$ as
\begin{equation} \label{eq:recursionRoot}
	\tdvsa_k \coloneqq (S^k_{\tdvsa_1,\ldots,\tdvsa_{k-1}})^{-1}(\rdvsa_k), \qquad k=1,\ldots,n,
\end{equation}
where $S^k_{\tdvsa_1,\ldots,\tdvsa_{k-1}}:\real \rightarrow \real$
 is a one-dimensional function defined as
 $w \mapsto S^k( \tdvsa_1,\ldots,\tdvsa_{k-1}, w)$.
 That is, $S^k_{\tdvsa_1,\ldots,\tdvsa_{k-1}}$ is the restriction of the 
 $k$th component of the inverse transport obtained by fixing the first $k-1$
 input variables $\tdvsa_1,\ldots,\tdvsa_{k-1}$.
 Thus, $\tdva$ can be computed recursively via a sequence of $n$
 one-dimensional root-finding problems. Monotonicity of the triangular
 maps guarantees that \eqref{eq:recursionRoot} has a unique
 real solution for each $k$ and any given $\rdva$.   
 Here, one can use any off-the-shelf root-finding
 algorithm \footnote{Roots can be found using, for instance, Newton's
   method. When
   a component of the inverse transport is parameterized using
   polynomials, however, then a more robust root-finding approach is to use a
   bisection method based on Sturm sequences (e.g.,
   \cite{Parno2014thesis}).}.
Whenever the transport is high-dimensional (e.g., hundreds or thousands of components), 
 this recursive approach might become inaccurate, as it is sequential in nature. 
 In this case, we recommend running a few Newton iterations of the form
 \begin{equation}
 	\tdv_{j+1} = \tdv_j - \defmap(\tdv_j)^{-1}(S(\tdv_j)-\rdva)	
 \end{equation}
 to clean up the approximation of the root $\tdva$ obtained from 
 the recursive algorithm \eqref{eq:recursionRoot}.
 
 An alternative way to evaluate the direct transport is to build a parametric
 representation of $\emap$ itself via standard regression techniques.
 In particular, if $\{\tdv_1 ,\ldots,\tdv_M \}$ are samples from the target distribution, then 
 $\{\rdv_1 ,\ldots,\rdv_M \}$, with $\rdv_k \coloneqq S(\tdv_k)$ for $k=1,\ldots,M$, are
 samples from the reference distribution.
 Note that there is a one-to-one correspondence between target  and reference samples.
 Thus, we can use these pairs of samples to define a simple
 constrained least-squares problem to approximate the direct transport
 as:
\begin{eqnarray}  
			& {\rm min}     &  \sum_{k=1}^{M} \sum_{i=1}^{\pd} \,
 	\left(\emap^i( \rdv_k ) - \tdv_k \right)^2   \\
			& {\rm s.t.}      &    
			\partial_i \emap^i > 0 \qquad i=1,\ldots,n\constrrefae      \nonumber  \\
	        &  				    & \emap \in \spaceMapT       \nonumber \label{eq:lsqmap}.
\end{eqnarray} 
In particular, each component of the direct transport can be approximated independently
(and in parallel) as the minimizer of 
\begin{eqnarray}  
			& {\rm min}     &  \sum_{k=1}^{M} 
 	\left(\emap^i( \rdv_k ) - \tdv_k \right)^2  \\
			& {\rm s.t.}      &    
			\partial_i \emap^i > 0 \constrrefae      \nonumber  \\
	        &  				    & \emap^i \in  \spaceMap_i      \nonumber \label{eq:lsqmapcomp},
\end{eqnarray}  
where $ \spaceMap_i$ denotes the space of smooth real-valued functions of $i$ variables.
Of course, the numerical solution of \eqref{eq:lsqmapcomp} requires the suitable choice of 
a finite dimensional approximation space $\spaceMap_i^h \subset \spaceMap_i$.  
\medskip

\section{Parameterization of transport maps}
\label{sec:MapParam}

As noted in the previous sections, the optimization problems that one solves to obtain either the direct or inverse transport must, at some point, introduce discretization. In particular, we must define finite-dimensional approximation spaces (e.g., $\spaceMapT^h$) within which we search for a best map. In this section we describe several useful choices for $\spaceMapT^h$ and the associated map parameterizations. Closely related to the map parameterization is the question of how to enforce the monotonicity constraints $\partial_k T^k > 0$ or $\partial_k S^k > 0$ over the support of the reference and target densities, respectively. For some parameterizations, we will explicitly introduce discretizations of these monotonicity constraints. A different map parameterization, discussed in Section~\ref{s:mapMonotone}, will satisfy these monotonicity conditions automatically.

For simplicity, we will present the parameterizations below mostly in the context of the direct transport $T$. But these parameterizations can be used interchangeably for both the direct and inverse transports.

\subsection{Polynomial representations}\label{sec:MapParam:poly}
A natural way to parameterize each component of the map $T$ is by expanding it in a basis of multivariate polynomials.  We define each multivariate polynomial  $\psi_{\mathbf{j}}$ as a product of $n$ univariate polynomials, specified via a multi-index $\mathbf{j} = (j_1,j_2,\ldots ,j_n)\in \mathbb{N}_0^n$, as:
\begin{equation}
\psi_{\mathbf{j}}(\rdv) = \prod_{i=1}^n \varphi_{j_i}(\rdvs_i),\label{eq:multiPolyForm}
\end{equation}
where $\varphi_{j_i}$ is a univariate polynomial of degree $j_i$. The univariate polynomials can be chosen from any standard orthogonal polynomial family (e.g., Hermite, Legendre, Laguerre) or they can even be monomials. %
That said, it is common practice in uncertainty quantification to choose univariate polynomials that are orthogonal with respect to the input measure, which in the case of the direct transport is $\rmeas$. If $\rmeas$ is a standard Gaussian, the $(\varphi_i)_i$ above would be (suitably scaled and normalized) Hermite polynomials. The resulting map can then be viewed as a \textit{polynomial chaos} expansion \cite{Xiu2002,LeMaitre2010} of a random variable distributed according to the target measure. From the coefficients of this polynomial expansion, moments of the target measure can be directly---that is, \textit{analytically}---evaluated.
In the case of inverse transports, however,  $\tmeas$ is typically not among the canonical distributions found in the Askey scheme for which standard orthogonal polynomials can be easily evaluated. While it is possible to construct orthogonal polynomials for more general measures \cite{Gautschi1996}, the relative benefits of doing so are limited, and hence with inverse transports we do not employ a basis orthogonal with respect to $\tmeas$.

Using multivariate polynomials given in \eqref{eq:multiPolyForm}, we can express each component of the transport map $T \in \spaceMapT^h$ as
\begin{equation}
T^k(\rdv) = \sum_{\mathbf{j}\in\mathcal{J}_k} \mapp_{k,\mathbf{j}}  \, \psi_{\mathbf{j}}(\rdv), \ \ k=1, \ldots, n \label{eq:polyexpand},
\end{equation}
where $\mathcal{J}_k$ is a set of multi-indices defining the polynomial terms in the expansion for dimension $k$ and $\mapp_{k,\mathbf{j}} \in \mathbb{R}$ is a scalar coefficient. Importantly, the proper choice of each multi-index set $\mathcal{J}_k$ will force $T$ to be lower triangular. 
For instance, a standard choice of  $\mathcal{J}_k$ involves restricting each map component to a total-degree polynomial space:
\begin{equation}
\label{totaldegreepolyspace}
\mathcal{J}^{TO}_{k} = \{ \mathbf{j} : \Vert  \mathbf{j}\Vert_1\leq p \, \wedge \,   j_i=0, \ \forall i>k\}, \ k=1,\ldots, n
\end{equation}
The first constraint in this set, $ \Vert  \mathbf{j} \Vert_1\leq p$, limits the total degree of each polynomial to $p$ while the second constraint, $j_i=0, \  \forall i>k$, forces $T$ to be lower triangular. Expansions built using $\mathcal{J}^{TO}_{d}$ are quite ``expressive'' in the sense of being able to capture complex nonlinear dependencies in the target measure.  However, the number of terms in $\mathcal{J}_{k}^{TO}$ grows rapidly with $k$ and $p$.  A  smaller multi-index set can be obtained by removing all the mixed terms in the basis:
\[
  \mathcal{J}^{NM}_{k} = \{ \mathbf{j} : \Vert\mathbf{j}\Vert_1 \leq p \, \wedge \,   j_i \,  j_\ell=0, \ \forall i \neq \ell \, \wedge \,    j_i=0, \ \forall  i > k \}.
\] 
An even more parsimonious option is to use diagonal maps, via the multi-index sets
\[
  \mathcal{J}^{D}_{k} = \{ \mathbf{j} : \Vert\mathbf{j}\Vert_1 \leq p \, \wedge \,  j_i=0, \  \forall i \neq  k\}.
\] 
Figure \ref{fig:multindexSets} illustrates the difference between these three sets for $p=3$ and $k = 2$. 

An alternative to using these standard and isotropic bases is to adapt the polynomial approximation space to the problem at hand. This becomes particularly important in high dimensions. For instance, beginning with linear maps (e.g., $\mathcal{J}^{TO}_{k}$ with $p=1$) \cite{Moselhy2011} introduces an iterative scheme for enriching the polynomial basis, incrementing the degree of $\spaceMapT^h$ in a few input variables at a time. Doing so enables the construction of a transport map in $O(100)$ dimensions. In a different context, \cite{Parno2015} uses the conditional independence structure of the posterior distribution in a multiscale inference problem to enable map construction in $O(1000)$ dimensions. Further comments on adapting $\spaceMapT^h$ are given in Section~\ref{sec:Conc}.

\begin{figure}[h]\label{fig:multindexSets}
\centering 
\begin{tikzpicture}

\node[anchor=south] at (2,4.6) {Terms in multi-index set};

\foreach \i in {-0.5,0.5,...,4.5} {
        \draw [very thin,gray] (\i,-0.5) -- (\i,4.5);
    }
    \foreach \i in {0,1,...,4} {
        \node[anchor=north] at (\i,-0.55) {$\i$};
    }
    \foreach \i in {-0.5,0.5,...,4.5} {
        \draw [very thin,gray] (-0.5,\i) -- (4.5,\i);
    }
   \foreach \i in {0,1,...,4} {
        \node[anchor=east] at (-0.55,\i) {$\i$};
    }
    
    \node[anchor=north] at (2,-1) {$j_1$};
  \node[anchor=east] at (-1,2) {$j_2$};
  
      \foreach \i in {0,...,3}{
        \foreach \j in {0,...,\i}{
          \draw[fill=green!70!black] (3-\i,\j) circle (0.45);
        }
      }
    \foreach \i in {0,...,3}{
        \draw[fill=blue] (0,3-\i) circle (0.3);
      }
    \foreach \i in {0,...,3}{
        \draw[fill=red!70!black] (3-\i,0) circle (0.15);
        \draw[fill=red!70!black] (0,3-\i) circle (0.15);
      }

\node[anchor=west] at (5,3) {\tikz{ \draw[fill=green!70!black] circle (0.7ex); } Total-degree set $\mathcal{J}^{TO}_{2}$};
\node[anchor=west] at (5,2) {\tikz{ \draw[fill=blue] circle (0.7ex); } Set of diagonal terms $\mathcal{J}^{D}_{2}$};
\node[anchor=west] at (5,1) {\tikz{ \draw[fill=red!70!black] circle (0.7ex); } Set with no mixed terms $\mathcal{J}^{NM}_{2}$};
\end{tikzpicture}

\caption[Visualization of multi-index sets.]{Visualization of multi-index sets for the second component of a two dimensional map, $T^2(x_1,x_2)$.  In this case, $j_1$ is the degree of a basis polynomial in $x_1$ and $j_2$ is the degree in $x_2$.  A filled circle indicates that a term is present in the set of multi-indices.}
\end{figure}
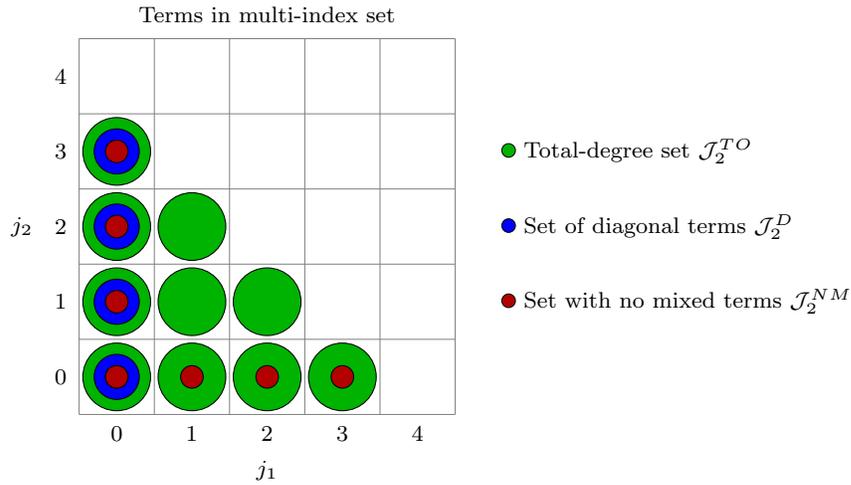

\subsection{Radial basis functions}\label{sec:MapParam:rbf}

An alternative to a polynomial parameterization of the map is to employ a combination of linear terms and radial basis functions. This representation can be more efficient than a polynomial representation in certain cases---for example, when the target density is multi-modal. The general form of the expansion in \eqref{eq:polyexpand} remains the same, but we replace polynomials of degree greater than one with radial basis functions as follows:
\begin{equation}
T^k(\rdv) = a_{k,0} + \sum_{j=1}^k a_{k,j} x_j + \sum_{j=1}^{P_k}b_{k,j}\phi_j(x_1,x_2,\ldots ,x_k; \bar{\rdv}^{k,j}), \ k=1, \ldots, n\label{eq:rbfExpand},
\end{equation}
where $P_k$ is the total number of radial basis functions used for the $k$th component of the map and $\phi_j(x_1,x_2,\ldots ,x_k; \bar{\rdv}^{k,j})$ is a radial basis function centered at 
$\bar{\rdv}^{k,j}\in\real^k$. Note that this representation ensures that the overall map $T$ is lower triangular. The $a$ and $b$ coefficients can then be exposed to the optimization algorithm used to search for the map.

Choosing the centers and scales of the radial basis functions can be challenging in high dimensions, though some heuristics for doing so are given in \cite{Parno2014thesis}. To circumvent this difficulty, \cite{Parno2014thesis} also proposes using only univariate radial basis functions and embedding them within a composition of maps.

\subsection{Monotonicity constraints and monotone parameterizations}
\label{s:mapMonotone}
Neither the polynomial representation \eqref{eq:polyexpand} nor the radial basis function representation \eqref{eq:rbfExpand} yield monotone maps for all values of the coefficients. With either of these choices for the approximation space $\spaceMapT^h$, we need to enforce the monotonicity constraints explicitly. Recall that, for the triangular maps considered here, the monotonicity constraint reduces to requiring that $\partial_k T^k > 0$ over the entire support of the reference density, for $k=1, \ldots, n$. It is difficult to enforce this condition everywhere, so instead we choose a finite sequence of points $(\sri)_i$---a stream of samples from the reference distribution, and very often the same samples used for the sample-average approximation of the objective \eqref{OptimDirectApprox}---and enforce local monotonicity at each point: $\partial_k T^k(\sri) > 0$, for $k=1, \ldots, n$. The result is a finite set of linear constraints. Collectively these constraints are weaker than requiring monotonicity everywhere, but as the cardinality of the sequence $(\sri)_i$ grows, we have stronger guarantees on the monotonicity of the transport map over the entire support of the reference density. When monotonicity is lost, it is typically only in the tails of $\rmeas$ where samples are fewer. We should also point out that the $(-\log \det \demap)$ term in the objective of \eqref{OptimDirect} acts as a barrier function for the constraint $\demap \succ 0$ \cite{renegar2001mathematical}. 

A more elegant alternative to discretizing and explicitly enforcing the monotonicity constraints is to employ parameterizations of the map that are in fact \textit{guaranteed} to be monotone
\cite{bigoni2016monotone}. Here we take advantage of the fact that monotonicity of a triangular function can be expressed in terms of one-dimensional monotonicity of its components. A smooth monotone increasing function of one variable, e.g., the first component of the lower triangular map, can be written as 
\cite{ramsay1998estimating}:
\begin{equation}
  T^1(x_1)= a_1+ \int_0^{x_1} \exp \left ( b_1( w ) \right )\, {\rm d} w ,
\end{equation}
where $a_1 \in \mathbb{R}$ is a constant and $b_1: \mathbb{R} \to \mathbb{R}$ is an arbitrary function. This can be generalized to the $k$th component of the map as:
\begin{equation}
  T^k(x_1 , \ldots , x_k)= a_k(x_1,\ldots,x_{k-1}) + \int_0^{x_k} \exp \left ( b_k(x_1,\ldots,x_{k-1}, w) \right )\, {\rm d} w
\end{equation}
for some functions $a_k: \mathbb{R}^{k-1} \to \mathbb{R}$ and $b_k: \mathbb{R}^{k} \to \mathbb{R}$. Note that $a_k$ is not a function of the $k$th input variable. Of course, we now have to pick a finite-dimensional parameterization of the functions $a_k,b_k$, but the monotonicity constraint is automatically enforced since
	\begin{equation}
	 \partial_k T^k(\rdv) = \exp \left ( b_k( x_1 , \ldots , x_k)  \right ) > 0, 
   \qquad \forall \rdv \in \real^n,
	\end{equation}
and for all choices of $a_k$ and $b_k$. Enforcing monotonocity through the map parameterization in this way, i.e., choosing $\spaceMapT^h$ so that it only contains monotone lower triangular functions, allows the resulting finite-dimensional optimization problem to be unconstrained.

\section{Related work}
\label{sec:relatedwork}

The idea of using nonlinear transformations to accelerate or simplify sampling has appeared in many different settings. Here we review several relevant instantiations.

Perhaps the closest analogue of the density-based map construction of Section~\ref{sec:DensConst} is the implicit sampling approach of 
\cite{chorin2009implicit,chorin2010implicit}. 
While implicit sampling was first proposed in the context of Bayesian 
filtering \cite{chorin2009implicit,chorin2010implicit,morzfeld2011implicit,atkins2012implicit}, it is in fact a more general scheme for importance simulation \cite{morzfeld2015parameter}. Consider the K--L divergence objective of the optimization problem \eqref{abstractOptimDirectRosenblatt}. At optimality, the K--L divergence is zero. Re-arranging this condition and explicitly writing out the arguments yields:
\begin{equation}
\Ex_{\rd} \left  [ \log \td \left ( T(\rdv) \right) +  \log \det \demap(\rdv)  - \log \beta - \log \rd(\rdv)  \right ] = 0,
\label{eq:mapvsIS}
\end{equation}
where $\beta$ is the normalizing constant of the unnormalized target density $\td$ \eqref{eq:normconst}. Now let $\tdv = T(\rdv)$. 
The central equation in implicit sampling methods is \cite{chorin2009implicit}:
\begin{equation}
\log \td (\tdv)  - \mathcal{C} = \log \rd(\rdv),
\label{eq:implicitsample}
\end{equation}
where $\mathcal{C}$ is an easily computed constant. Implicit sampling first draws a sample $\sri$ from the reference density $\eta$ and then seeks a corresponding $\sti$ that satisfies \eqref{eq:implicitsample}. This problem is generally underdetermined, as terms in \eqref{eq:implicitsample} are scalar-valued while the samples $\sri, \sti$ are in $\mathbb{R}^n$. Accordingly, the random map implementation of implicit sampling 
\cite{Morzfeld2012} restricts the search for  $\sti$ to a one-dimensional optimization problem along randomly-oriented rays emanating from a point in $\mathbb{R}^n$, e.g., the mode of the target distribution. This scheme is efficient to implement, though it is restricted to target densities whose contours are star-convex with respect to the chosen point 
\cite{goodman2015small}. 
Satisfying \eqref{eq:implicitsample} in this way defines the \textit{action} of a map from $\rd$ to another distribution, and the intent of implicit sampling is that this pushforward distribution should be close to the target. There are several interesting contrasts between \eqref{eq:mapvsIS} and \eqref{eq:implicitsample}, however. First is the absence of the Jacobian determinant in \eqref{eq:implicitsample}. The samples $\sti$ produced by implicit sampling must then (outside of the Gaussian case) be endowed with weights, which result from the Jacobian determinant of the implicit map.
The closeness of the implicit samples to the desired target is reflected in the variation of these weights. A second contrast is that \eqref{eq:mapvsIS} is a global statement about the action of a map $T$ over the entire support of $\rd$, wherein the map $T$ appears explicitly. On the other hand, \eqref{eq:implicitsample} is a relationship between points in $\mathbb{R}^n$. The map does not appear explicitly in this relationship; rather, the way in which  \eqref{eq:implicitsample} is satisfied {\it implicitly} defines the map.

Another optimization-based sampling algorithm, similar in spirit to implicit sampling though different in construction, is the randomize-then-optimize (RTO) approach of \cite{Bardsley2014}. This scheme is well-defined for target distributions whose log-densities can be written in a particular quadratic form following a transformation of the parameters, 
with some restrictions on the target's degree of non-Gaussianity. The algorithm proceeds in three steps. First, one draws a sample $\sri$ from a Gaussian (reference) measure, and uses this sample to fix the objective of an unconstrained optimization problem in $n$ variables. Next, one solves this optimization problem to obtain a sample $\sti$. And finally, this sample is ``corrected'' either via an importance weight or a Metropolis step. The goal, once again, is that the distribution of the samples $\sti$ should be close to the true target $\ntd$---though as in implicit sampling, outside of the Gaussian case these two distributions will not be identical and the correction step is required.  

Hence, another way of understanding the contrast between these optimization-based samplers and the transport map framework is that the latter defines an optimization problem \textit{over maps}, where minimizing the left-hand side of \eqref{eq:mapvsIS} is the objective. Implicit sampling and RTO instead solve simpler optimization problems \textit{over samples}, where each minimization yields the action of a particular transport. 
A crucial feature of these transports is that the pushforward densities they induce can be evaluated in closed form, thus allowing implicit samples and RTO samples to be reweighted or Metropolized in order to obtain asymptotically unbiased estimates of target expectations.
Nonetheless, implicit sampling and RTO each implement a particular transport, and they are \textit{bound} to these choices. In other words, these transports cannot be refined, and it is difficult to predict their quality for arbitrarily non-Gaussian targets. The transport map framework instead implements a search over a space of maps, and therefore contains a tunable knob between computational effort and accuracy: by enriching the search space $\spaceMapT^h$, one can get arbitrarily close to any target measure. Of course, the major disadvantage of the transport map framework is that one must then parameterize maps $T \in \spaceMapT^h$, rather than just computing the action of a particular map. But parameterization subsequently allows direct evaluation and sampling of the pushforward $T_\sharp \rd$ without appealing again to the target density.

Focusing for a moment on the specific problem of Bayesian inference, another class of approaches related to the transport map framework are the sampling-free Bayesian updates introduced in 
\cite{rosic2012sampling,litvinenko2013uncertainty,litvinenko2013inverse,
matthies2015inverse,saad2009characterization}. 
These methods treat Bayesian inference as a projection. In particular, they approximate the conditional expectation of any prescribed function of the parameters, where conditioning is with respect to the $\sigma$-field generated by the data. The approximation of the conditional expectation may be refined by enlarging the space of functions (typically polynomials) on which one projects; hence one can generalize linear Bayesian updates \cite{rosic2012sampling} to nonlinear Bayesian updates \cite{litvinenko2013uncertainty}. 
The precise goal of these approximations is different from that of the transport map framework, however.  Both methods approximate random variables, but different ones: \cite{litvinenko2013uncertainty} focuses on the conditional expectation of a function of the parameters (e.g., mean, second moment) as a function of the data random variable, whereas the transport approach to inference \cite{Moselhy2011} aims to fully characterize the posterior random variable for a particular realization of the data. 

Ideas from \textit{optimal} transportation have also proven useful in the context of Bayesian inference. In particular, \cite{Reich2013} solves a \textit{discrete} Kantorovich optimal transport problem to find an optimal transport plan from a set of unweighted samples representing the prior distribution to a weighted set of samples at the same locations, where the weights reflect the update from prior to posterior. This transport plan is then used to construct a linear transformation of the prior ensemble that yields consistent posterior estimates. The linear transformation can be understood as a resampling strategy, replacing the weighted samples with new samples that are convex combinations of the prior samples. The ability to ``move'' the samples 
within the convex hull of the prior ensemble leads to improved performance over other resampling strategies, though the prior samples should then have good coverage of the support of the posterior.

Turning to the sample-based map construction of Section~\ref{sec:SampConst}, it is interesting to note that attempts to Gaussianize collections of samples using nonlinear transformations date back at least to 1964 \cite{Box1964}. In the geostatistics literature, the notion of Gaussian anamorphosis \cite{wackernagel2013multivariate} uses the empirical cumulative distribution function (CDF), or Hermite polynomial approximations of the CDF, to Gaussianize the \textit{marginals} of multivariate data. These transformations do not create joint Gaussianity, however. 

To construct joint transformations of dependent multivariate data, \cite{Stavropoulou2015} proposes a scheme employing discrete optimal transport. This approach generates an equivalent number of samples from a reference measure; solves a discrete assignment problem between the two sample sets, given a quadratic transport cost; and uses the resulting pairs to estimate a polynomial map using linear regression. This is a two-stage approach, in contrast with the 
single convex optimization problem proposed in Section~\ref{sec:SampConst}. For reference and target distributions with compact support, it yields an approximation of the Monge optimal transport rather than the Knothe-Rosenblatt rearrangement. 

Moving from polynomial approximations to nonparametric approaches, 
\cite{Tabak2013,laurence2014constrained,tabak2014data} introduce schemes for multivariate density estimation based on progressively transforming a given set of samples to a (joint) standard Gaussian by \textit{composing} a sequence of monotone maps. The maps are typically chosen to be rather simple in form (e.g., sigmoid-type functions of one variable). In this context, we note that the empirical K--L divergence objective in \eqref{OptimInverseSAA} is the pullback density $S^{-1}_\sharp \eta$ evaluated at the samples $(\sti)_{i=1}^M$, and hence can be viewed as the log-likelihood of the map $S$ given the target samples. 
In \cite{Tabak2013}, each new element of the composition of maps is guaranteed not to decrease this log-likelihood function. A related scheme is presented in 
\cite{Laparra2011}; 
here the sequence of maps alternates between rotations and diagonal maps that transform the marginals. Rotations are chosen via principle component analysis (PCA) or independent component analysis (ICA). The resulting composition of maps can Gaussianize remarkably complex distributions in hundreds of dimensions (e.g., samples from a face database). Both of these methods, however, reveal an interesting tension between the number of maps in the composition and the complexity of a single map. When each map in the composition is very simple (e.g., diagonal, or even constant in all but one variable) the maps are easy to construct, but their composition can converge very slowly to a Gaussianizing transformation. On the other hand, we know that there exist maps (e.g., the Knothe-Rosenblatt rearrangement or the Monge optimal transport map) that can Gaussianize the samples immediately, but approximating them directly requires much more effort. Some of these tradeoffs are explored in \cite{Parno2014thesis}.

Yet another use of transformations in stochastic simulation is the warp bridge sampling approach of 
\cite{meng2002warp}. The goal of bridge sampling is to estimate the ratio of normalizing constants of two probability densities (e.g., the ratio of Bayesian model evidences). 
\cite{meng2002warp} introduces several deterministic and/or stochastic transformations to increase the overlap of the two densities---by translating, scaling, and even symmetrizing them. These transformations can reduce the asymptotic variance of the bridge sampling estimator. More recent generalizations use Gaussian mixture approximations of the densities to design transformations suitable for multi-modal problems \cite{wang2015methods}.

Finally, setting aside the notion of nonlinear transformations, it is useful to think of the minimization problem \eqref{abstractOptimDirectRosenblatt} in the broader context of variational Bayesian methods 
\cite{attias1999inferring,
jaakkola2000bayesian,wainwright2008graphical,fox2012tutorial}. 
As in typical variational Bayesian approaches, we seek to approximate some complex or intractable distribution (represented by $\ntd$) with a simpler one. But the approximating distribution in the transport map framework is any pushforward of a reference density. In contrast with variational Bayesian approaches, this distribution can be found %
without imposing strong assumptions on its factorization (e.g., the mean field approximation) or on the family of distributions from which it is drawn (e.g., an exponential family). The transport map framework is also distinguished from variational Bayesian approaches due to the availability of the pullback density \eqref{eq:pullbackdensity}---in an intuitive sense, the ``leftover'' after approximation with any given map. Using evaluations of the pullback density, one can compose sequences of maps, enrich the approximation space of any given map, or use the current transport to precondition an exact sampling scheme. 
\section{Conditional sampling}
\label{sec:CondSamp}
In this section we will show how the triangular structure of the
transport map allows efficient sampling from particular
conditionals %
of the target density.
This capability is important because, in general, the ability
  to sample from a distribution does not necessarily provide efficient
  techniques for also sampling its conditionals. %
As in the previous sections, assume that the reference
and target measures are absolutely continuous with respect to the Lebesgue measure
on $\real^\dimtot$ with smooth and positive densities.
Let $\emap:\real^\dimtot \rightarrow \real^\dimtot$ be a triangular and monotone increasing transport  that pushes forward
the reference to the target density, i.e., $\emap_\sharp \rd = \ntd$, where $\emap$ is the 
Knothe-Rosenblatt rearrangement.

We first need to introduce some additional notation. 
There is no loss of generality in thinking of  
the target as the joint distribution of some
random vector $\rvtot$ in $\real^\dimtot$.
Consider a partition of this random vector as
$\rvtot=(\frv,\srv)$ where $\frv\in\real^{\dimfrv}$,  
$\srv\in\real^{\dimsrv}$, and $n=\dimfrv+\dimsrv$.
In other words, $\frv$ simply comprises the first $\dimfrv$ components of $\rvtot$.
We equivalently denote the joint density of $\rvtot=(\frv,\srv)$ by either $\ntd$
or $\jointfs$. That is, $\ntd \equiv \jointfs$.
We define the conditional density of $\srv$ given $\frv$ as
\begin{equation}
	\condsgf \coloneqq \frac{\jointfs}{\margf},
\end{equation}
where $\margf \coloneqq \int \jointfs(\cdot,\sdv) \,{\rm d}\sdv$ is the marginal density of $\frv$.
In particular, $\condsgf(\sdv\vert\fdv)$ is the conditional density of $\srv$ at 
$\sdv\in\real^{\dimsrv}$ given the event $\{ \frv = \fdv \}$.
Finally, we define $\condsgfd$ as a map from $\real^{\dimsrv}$ to $\real$ such that
\begin{equation}
	 \sdv \mapsto \condsgf(\sdv\vert\fdv).
\end{equation}
We can think of $\condsgfd$ as a particular normalized slice of the joint density $\jointfs$  for
$\frv=\fdv$, as shown in Figure \ref{fig:condSlice}.
Our goal is to show how the triangular transport map $\emap$  can be used to
efficiently sample the conditional density $\condsgfd$.

If $\emap$ is a monotone increasing lower triangular transport on $\real^\dimtot$, we can denote its components by
\begin{equation} \label{eq:formTransport}
 \emap(\rdv)	\coloneqq \emap(\rdvf,\rdvse) = 
    \left[ 
 	\begin{array}{l}
 	\mapcf(\rdvf)\\
 	\mapcs(\rdvf,\rdvse)
 	\end{array}
 	\right], \qquad \forall	 \rdv \in\real^{\dimtot},
\end{equation}	
where $\mapcf:\real^{\dimfrv}\rightarrow\real^{\dimfrv}$, 
$\mapcs:\real^{\dimfrv}\times\real^{\dimsrv}\rightarrow\real^{\dimsrv}$, 
and where $\rdv\coloneqq(\rdvf,\rdvse)$ is a partition of the dummy variable 
$\rdv\in\real^{\dimtot}$ as $\rdvf\in\real^{\dimfrv}$
and
$\rdvse\in\real^{\dimsrv}$, 
i.e., $\rdvf$ consists of the first $\dimfrv$ components of $\rdv$.

In the context of Bayesian inference, $\srv$ could represent the 
inversion parameters or latent variables and $\frv$ the observational
data.  In this interpretation, $\condsgfd$ is just the posterior
distribution of the parameters for a particular realization of the
data.  Sampling this posterior distribution yields an explicit
characterization of the Bayesian solution and is thus of crucial
importance. 
This scenario is particularly relevant in the context
of online Bayesian inference where one is concerned with fast posterior computations for multiple
realizations of the data (e.g., \cite{Parno2014thesis,huan16maps}).
Of course, if one is only interested in $\condsgfd$ for a single
realization of the data %
then there is no need
to first approximate the joint density $\jointfs$ and subsequently perform
conditioning to sample $\condsgfd$. 
Instead, one should simply pick $\condsgfd$ as the target density and compute
the corresponding transport \cite{Moselhy2011}.
In the latter case the dimension of the transport map would be  
independent of the size of the data.
\medskip

The following lemma shows how to efficiently sample the conditional
density $\condsgfd$ given a monotone increasing triangular transport
$\emap$.  In what follows we assume that the reference density can be
written as the product of its marginals; that is,
$\rd(\rdv)=\rdfrv(\rdvf) \rdsrv(\rdvse)$ for all $\rdv=(\rdvf,\rdvse)$
in $\real^{\dimtot}$ and with marginal densities $\rdfrv$ and $\rdsrv$.
This hypothesis is not restrictive as the reference density is a
degree of freedom of the problem (e.g., $\rd$ is often a standard
normal density).

\begin{lemma} \label{lem:conditional}
For a fixed $\fdv \in\real^{\dimfv}$, define $\solsystem$ as the unique
element of  $\real^{\dimfv}$ such that $\mapcf(\solsystem)=\fdv$.
Then,  the map
$\emap_\fdv:\real^{\dimsrv} \rightarrow \real^{\dimsv}$, defined  as
\begin{equation} \label{eq:condTransp}
	\wb \mapsto \mapcs(\,\solsystem, \wb),
\end{equation}
pushes forward $\rdsrv$ 
to the desired conditional density $\condsgfd$.
\end{lemma}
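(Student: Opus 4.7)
The plan is a direct change-of-variables computation exploiting the block-triangular structure of $\emap$ and the product structure of the reference density $\rd = \rdfrv \rdsrv$.

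First I would note that because $\emap$ is lower triangular, its Jacobian has a block-triangular form
\[
\demap(\rdvf,\rdvse) =
\begin{pmatrix}
\nabla \mapcf(\rdvf) & 0 \\
\partial_{\rdvf}\mapcs(\rdvf,\rdvse) & \partial_{\rdvse}\mapcs(\rdvf,\rdvse)
\end{pmatrix},
\]
so $\det \demap = \det(\nabla\mapcf)\cdot \det(\partial_{\rdvse}\mapcs)$. Monotonicity guarantees both factors are positive; in particular $\mapcf:\real^{\dimfrv}\to\real^{\dimfrv}$ is a diffeomorphism, so $\solsystem = \mapcf^{-1}(\fdv)$ exists and is unique, and likewise $\wb \mapsto \mapcs(\solsystem,\wb)$ is a diffeomorphism of $\real^{\dimsrv}$.

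Next I would show that the first block $\mapcf$ pushes forward the first marginal of the reference to $\margf$, i.e.\ $(\mapcf)_\sharp\rdfrv = \margf$. This follows immediately from independence in $\rd$: if $(\rrv_{\frv},\rrv_{\srv})\sim\rd$, then $\rrv_{\frv}\sim\rdfrv$ and the first $\dimfrv$ components of $\emap(\rrv_{\frv},\rrv_{\srv})$ are $\mapcf(\rrv_{\frv})$, which by assumption have marginal density $\margf$. Expressing this via change of variables yields the identity
\[
\margf(\fdv) = \frac{\rdfrv(\solsystem)}{\det \nabla\mapcf(\solsystem)}.
\]

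Then I would apply the full pushforward identity $\emap_\sharp \rd = \jointfs$ at the point $(\fdv,\sdv)$. Writing $\wb = \emap_\fdv^{-1}(\sdv)$, i.e.\ the unique $\wb$ with $\mapcs(\solsystem,\wb)=\sdv$, the preimage of $(\fdv,\sdv)$ under $\emap$ is $(\solsystem,\wb)$, and using the factorization of $\det\demap$ together with $\rd = \rdfrv \rdsrv$ gives
\[
\jointfs(\fdv,\sdv) = \frac{\rdfrv(\solsystem)\,\rdsrv(\wb)}{\det\nabla\mapcf(\solsystem)\,\cdot\,\det \partial_{\rdvse}\mapcs(\solsystem,\wb)}.
\]
Dividing by the expression for $\margf(\fdv)$ obtained in the previous step, the $\mapcf$-factors cancel and we are left with
\[
\condsgf(\sdv\mid\fdv) = \frac{\rdsrv(\wb)}{\det \partial_{\rdvse}\mapcs(\solsystem,\wb)} = \frac{\rdsrv(\emap_\fdv^{-1}(\sdv))}{|\det \nabla \emap_\fdv(\emap_\fdv^{-1}(\sdv))|},
\]
since $\nabla \emap_\fdv(\wb) = \partial_{\rdvse}\mapcs(\solsystem,\wb)$ by definition. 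The right-hand side is exactly the pushforward density $(\emap_\fdv)_\sharp \rdsrv$ evaluated at $\sdv$, proving the claim.

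The computation itself is routine; the only conceptual point that requires care is the cancellation argument, which hinges on two ingredients that must be verified in concert: the triangular structure of $\emap$ (so that $\mapcf$ alone determines the first marginal) and the product structure of $\rd$ (so that the conditional reference density $\rdsrv$ appears cleanly after dividing by $\margf$). Without either assumption the cancellation would fail, and the map $\emap_\fdv$ would not have the clean conditional interpretation stated in the lemma.
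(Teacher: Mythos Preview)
Your proof is correct and follows essentially the same approach as the paper's. Both arguments rely on the block-triangular factorization of $\det\demap$, the identity $(\mapcf)_\sharp\rdfrv=\margf$, and the product structure $\rd=\rdfrv\rdsrv$; the only cosmetic difference is that the paper verifies $\emap_\fdv^\sharp\condsgfd=\rdsrv$ (pullback direction) whereas you verify $(\emap_\fdv)_\sharp\rdsrv=\condsgfd$ (pushforward direction), which are two orderings of the same change-of-variables computation.
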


\begin{proof}
First of all, notice that $\solsystem \coloneqq (\mapcf)^{-1}(\fdv)$ is well defined since 
$\mapcf:\real^{\dimfrv}\rightarrow\real^{\dimfrv}$ is a monotone increasing and invertible function
by definition of the Knothe-Rosenblatt rearrangement $\emap$. Then:
	\begin{equation}
		\emap_\fdv\pull\,\condsgfd(\dvpl)  = 	
		\condsgf( \emap_\fdv(\dvpl) \vert \fdv )\,\vert \det  \demap_\fdv(\dvpl) \vert = 
		\frac{\jointfs( \fdv, \mapcs(\,\solsystem, \dvpl) )}{\margf(\fdv)}\,
		\det \nabla_{\dvpl}\, \mapcs(\,\solsystem, \dvpl)
	\end{equation}	
Since, by definition, $\mapcf(\solsystem)=\fdv$,	 we have for all 
$\dvpl \in \real^{\dimsrv}$:
	\begin{eqnarray}
	\emap_\fdv\pull\,\condsgfd(\dvpl) & = &
	\frac{\jointfs( \mapcf(\solsystem), \mapcs(\,\solsystem, \dvpl) )}
	{\margf(\mapcf(\solsystem))}\,
	\det \nabla_{\dvpl}\, \mapcs(\,\solsystem, \dvpl) \nonumber \\
	& = &
	\frac{\jointfs( \mapcf(\solsystem), \mapcs(\,\solsystem, \dvpl) )}
	{(\mapcf)\pull \margf(\solsystem)}\,
	\det \demap^{\frv}(\solsystem) \,\det \nabla_{\dvpl}\, \mapcs(\,\solsystem, \dvpl) \nonumber \\
	&=&
	 \frac{\emap\pull \jointfs( \solsystem , \dvpl )}
	 {\rdfrv(\solsystem)} =
	 \frac{\rd(\solsystem , \dvpl)}
	  {\rdfrv(\solsystem)} =
	  \rdsrv(\dvpl)
	 \nonumber  
	\end{eqnarray}
where we used the identity $\mapcf_\sharp\,\rdfrv = \margf $	 which follows
from the definition of Knothe-Rosenblatt rearrangement (e.g., \cite{Villani2009}).
\hfill $\square$
\end{proof}
\medskip

We can interpret the content of Lemma \ref{lem:conditional} in the
context of Bayesian inference.  If we observe a particular realization
of the data, i.e., $\frv=\fdv$, then we can easily sample the
posterior distribution $\condsgfd$ as follows.  First, solve the
nonlinear triangular system $\mapcf(\solsystem)=\fdv$ to get
$\solsystem$.  Since $\mapcf$ is a lower triangular and invertible
map, one can solve this system using the techniques described in
Section \ref{sec:SampMap:inverse}.  Then, define a new map
$\emap_\fdv:\real^{\dimsrv} \rightarrow \real^{\dimsv}$ as
$\emap_\fdv(\wb)\coloneqq \mapcs(\,\solsystem, \wb )$ for all
$\wb \in \real^{\dimsrv}$, and notice that the pushforward through the
map of the marginal distribution of the reference over
the parameters, i.e., $(\emap_\fdv)_\sharp \rdsrv$, is precisely the desired posterior
distribution.

Notice that $\emap_\fdv$ is a single transformation parameterized by $\fdv\in\real^{\dimfrv}$.
Thus it is straightforward to condition on a different value of $\frv$, say 
$\frv = \tilde{\fdv}$. 
We only need to solve a new nonlinear triangular system of the form 
$\mapcf(\boldsymbol{x}^\star_{\tilde{\fdv}})=\tilde{\fdv}$ to define a transport 
$\emap_{\tilde{\fdv}}$ according to \eqref{eq:condTransp}.
Moreover, note that the particular triangular structure of the
transport map $\emap$ is essential to achieving efficient sampling
from the conditional $\condsgfd$ in the manner described by Lemma
\ref{lem:conditional}.

\begin{figure}[h]
\centering
\input{figures/SliceDensity.tex}
\caption[Obtaining conditional density from joint density.]{
({\it left}) Illustration of a two dimensional
joint density $\jointfs$ together with a particular slice at $d=0$.
({\it right}) Conditional density $\condsgf(\sdvs\vert 0)$ obtained from a 
normalized slice of the joint density at $d=0$.
}
\label{fig:condSlice}
\end{figure}

\section{Example: biochemical  oxygen demand model}
\label{sec:SampUses}
\label{sec:SampUses:bod}
Here we demonstrate some of the measure transport approaches described
in previous sections with a simple Bayesian inference problem,
involving a model of biochemical oxygen demand (BOD) commonly used in
water quality monitoring \cite{Sullivan2010}.  This problem is a
popular and interesting test case for sampling methods (e.g., MCMC \cite{Parno2015mcmc},
RTO \cite{Bardsley2014}).  The time-dependent forward
model is defined by
\begin{equation}
\mathfrak{B}(t) = A\,(1-\exp(-B \,t))+\mathcal{E},
\label{eq:forwardmodel}
\end{equation}
where $A$ and $B$ are unknown scalar parameters modeled as random
variables, $t$ represents time, and $\mathcal{E}\sim
\Gauss(0,10^{-3})$ is an additive Gaussian observational noise that
is statistically independent of $A$ and $B$.  
In this example, the data 
$\data$ consist of five observations of $\mathfrak{B}(t)$ at $t=\{1,2,3,4,5\}$ and is
thus a vector-valued random variable defined by
 \[\data \coloneqq [\,\mathfrak{B}(1),\, \mathfrak{B}(2), \,\mathfrak{B}(3), \, 
\mathfrak{B}(4), \,\mathfrak{B}(5)\,].\] 
Our goal is to characterize the joint distribution of $A$ and $B$ conditioned on the 
observed data. 
We assume that $A$ and $B$ are independent under the prior measure, with uniformly 
distributed marginals:
\begin{equation}
A \sim \Unif(0.4,1.2), \qquad B \sim \Unif(0.01, 0.31).
\end{equation}
Instead of inferring $A$ and $B$ directly, we choose to invert for some new target 
parameters, $\param_1$ and $\param_2$, that are related to the original parameters through the CDF of a standard normal distribution:
\begin{eqnarray}
A & \coloneqq &
                \left[0.4+0.4\left(1+\erf{\left(\frac{\param_1}{\sqrt{2}}\right)}\right)\right] \label{eq:priortransf}
  \\
B & \coloneqq & \left[0.01+0.15\left(1+\erf{\left(\frac{\param_2}{\sqrt{2}}\right)}\right)\right].
\end{eqnarray}
Notice that these transformations are invertible.
Moreover, the resulting prior marginal distributions over the target
parameters $\param_1$ and $\param_2$ are given by
\begin{equation}
	\param_1\sim \Gauss(0,1), \qquad \param_2\sim \Gauss(0,1).
\label{eq:priorgauss}
\end{equation}
We denote the target random vector by
$\paramt\coloneqq(\param_1,\param_2)$.  The main advantage of
inferring $\paramt$ as opposed to the original parameters $A$ and $B$
directly is that the support of $\paramt$ is unbounded.  Thus, there
is no need to impose any geometric constraints on the range of the
transport map. 

\subsection{Inverse transport: map from samples}
\label{s:BODinverse}
We start with a problem of efficient online Bayesian inference---where
one is concerned with fast posterior computations for multiple
realizations of the data---using the inverse transport of Section
\ref{sec:SampConst}.  Our first goal is to characterize the joint
distribution of data and parameters, $\ntd_{\data,\paramt}$, by means
of a lower triangular transport.  As explained in Section
\ref{sec:CondSamp}, this transport will then enable efficient \textit{conditioning}
on different realizations of the data $\data$.  Posterior computations
associated with the conditional $\ntd_{\paramt \vert \data=\datai}$,
for arbitrary instances of $\datai$, will thus become computationally
trivial tasks.

Note that having defined the prior and likelihood via
\eqref{eq:priortransf}--\eqref{eq:priorgauss} and
\eqref{eq:forwardmodel}, respectively, we can generate arbitrarily
many independent ``exact'' samples from the joint target
$\ntd_{\data,\paramt}$. For the purpose of this demonstration, we will
pretend that we cannot evaluate the unnormalized target density and
that we can access the target only through these samples.  This is a
common scenario in Bayesian inference problems with intractable
likelihoods
\cite{wilkinson2011stochastic,marin2012approximate,Csillery2010}.
This sample-based setting is well suited to the computation of a
triangular inverse transport---a transport map that pushes forward the
target to a standard normal reference density---as the solution of a
convex and separable optimization problem \eqref{OptimInverseSAAstd}.  The
direct transport can then be evaluated implicitly using the techniques
described in Section \ref{sec:SampMap:inverse}.

We will solve \eqref{OptimInverseSAAstd} for a matrix of different numerical configurations. The expectation with respect to the target measure is discretized using different Monte Carlo sample sizes  ($5 \times 10^3$ versus $5\times 10^4$). The reference and target are measures on $\mathbb{R}^7$, and thus the finite-dimensional approximation space for the inverse transport, $\spaceMapT^h \subset \spaceMapT$, is taken to be a total-degree polynomial space in $\pd=7$ variables, parameterized with a Hermite basis and using range of different degrees \eqref{totaldegreepolyspace}. In particular, we will consider maps ranging from linear ($p=1$) to seventh degree ($p=7$). The result of \eqref{OptimInverseSAAstd} is an approximate inverse transport $\ifmap$. An approximation to the direct transport, $\imap$, is then obtained via standard regression techniques as explained in Section \ref{sec:SampMap:inverse}. In particular, the direct transport is sought in the same approximation space $\spaceMapT^h$ as the inverse transport.

In order to assess the accuracy of the computed transports, we will characterize the conditional density $\ntd_{\paramt \vert \data=\datai}$ (see Lemma \ref{lem:conditional} in Section \ref{sec:CondSamp}) for a value of the 
data given by 
\begin{equation*}
	\datai=[0.18,   0.32,   0.42,   0.49,   0.54].
\end{equation*}
Table \ref{tab:off:error} compares moments of the approximate conditional $\ntd_{\paramt \vert \data=\datai}$ computed via the inverse transports  to the ``true'' moments of this distribution as estimated via a ``reference''  adaptive Metropolis MCMC scheme \cite{Haario2001}. While more efficient MCMC algorithms exist, the adaptive Metropolis algorithm is well known and enables a qualitative comparison of the computational cost of the transport approach to that of a widely used and standard method. The MCMC sampler was tuned to have an acceptance rate of 26\%.  The chain was run for $6\times 10^6$ steps, $2\times 10^4$  of which were discarded as burn-in. The moments of the approximate conditional density $\ntd_{\paramt \vert \data=\datai}$ given by each computed transport are estimated using $3\times 10^4$ independent samples generated from the conditional map. The accuracy comparison in Table \ref{tab:off:error} shows that the cubic map captures the mean and variance of $\ntd_{\paramt \vert \data=\datai}$ but does not accurately capture the higher moments.  
Increasing the map degree, together with the number of target samples, yields better estimates of these moments. 
For instance, degree-seven maps constructed with $5\times 10^4$ target samples can reproduce the skewness and kurtosis of the conditional density reasonably well.
Kernel density estimates of the two-dimensional conditional density  $\ntd_{\paramt \vert \data=\datai}$ using  $50\, 000$ samples are also shown in Figure \ref{fig:approxPost50000} for different orders of the computed transport. The degree--seven map gives results that are nearly identical to the MCMC reference computation.

In a typical application of Bayesian inference, we can regard the time
required to compute an approximate inverse transport $\ifmap$ and
a corresponding approximate direct transport $\imap$ as ``offline''
time.  This is the expensive step of the computations, but it is
independent of the observed data. The ``online'' time is that
required to generate samples from the conditional distribution
$\ntd_{\paramt \vert \data=\datai}$ when a new realization of the
data $\{\data=\datai\}$ becomes available.  The online step is
computationally inexpensive since it requires, essentially, only
the solution of a single nonlinear triangular system of the dimension
of the data (see Lemma \ref{lem:conditional}).

In Table \ref{tab:off:time} we compare the computational time of the map-based approach to that of the reference adaptive Metropolis MCMC  scheme.  The online time column shows how long each method takes to generate $30\, 000$ independent samples from the conditional $\ntd_{\paramt \vert \data=\datai}$.  For MCMC, we use the average amount of time required to generate a chain with an effective sample size of $30\, 000$.
Measured in terms of online time, the polynomial transport maps are roughly two orders of magnitude more efficient than the adaptive MCMC sampler.  More sophisticated MCMC samplers could be used, of course, but the conditional map approach will retain a significant advantage because, after solving for $\solsystem$ in Lemma \ref{lem:conditional}, it can generate \textit{independent} samples at negligible cost.  We must stress, however, that the samples produced in this way are from an \textit{approximation} of the targeted conditional. In fact, the conditioning lemma holds true only if the computed joint transport is exact, and a solution of \eqref{OptimInverseSAAstd} is an approximate  transport for the reasons discussed in Section \ref{subsec:convex}. Put in another way, the conditioning lemma is exact for the pushforward of the approximate map, $\ntda = \imap_\sharp \rd$. Nevertheless, if the conditional density $\ntd_{\paramt \vert \data=\datai}$  can be evaluated up to a normalizing constant, then one can quantify the error in the approximation of the conditional using \eqref{eq:convergenceInverse}.
Under these conditions, if one is not satisfied with this error, then any of the approximate maps $\imap_\fdv$ constructed here could be useful as a proposal mechanism for importance sampling or MCMC, to generate (asymptotically) exact samples from the conditional of interest. For example, the map constructed using the offline techniques discussed in this example could provide an excellent initial map for the MCMC scheme of \cite{Parno2015mcmc}.  
Without this correction, however, the sample-based construction of lower triangular inverse maps, coupled with direct conditioning, can be seen as a flexible scheme for fast approximate Bayesian computation.

\begin{table}[h]
\centering
\caption[Accuracy of offline inference maps on BOD problem.]{
BOD problem of Section \ref{s:BODinverse} via inverse transport and conditioning. 
First four moments of the conditional density $\ntd_{\paramt \vert \data=\datai}$, for
$\datai=[0.18,   0.32,   0.42,   0.49,   0.54]$, estimated by a ``reference'' run of an adaptive Metropolis sampler with $6\times 10^6$ steps, and by transport maps up to degree $p=7$ with $3\times 10^4$ samples.}
\medskip
\footnotesize
\begin{tabular}{|c|c|rr|rr|rr|rr|}\hline
\multirow{2}{*}{map type} & \# training& \multicolumn{2}{c|}{mean} &  \multicolumn{2}{c|}{variance} &  \multicolumn{2}{c|}{skewness} &  \multicolumn{2}{c|}{kurtosis}\\
& samples & $\param_1$ & $\param_2$ & $\param_1$ & $\param_2$ & $\param_1$ & $\param_2$ & $\param_1$ & $\param_2$ \\\hline
\rowcolor{black!10} \multicolumn{2}{|c|}{MCMC ``truth''} &  0.075 & 0.875 & 0.190 & 0.397 & 1.935 & 0.681 & 8.537 & 3.437\\\hline
\multirow{2}{*}{$p=1$} & 5000 & 0.199 & 0.717 & 0.692 & 0.365 & -0.005 & 0.010 & 2.992 & 3.050\\
&  50000 & 0.204 & 0.718 & 0.669 & 0.348 & 0.016 & -0.006 & 3.019 & 3.001\\\hline
\rowcolor{blue!10}  & 5000 & 0.066 & 0.865 & 0.304 & 0.537 & 0.909 & 0.718 & 4.042 & 3.282\\
\rowcolor{blue!10} \multirow{-2}{*}{$p=3$} & 50000 & 0.040 & 0.870 & 0.293 & 0.471 & 0.830 & 0.574 & 3.813 & 3.069\\\hline
 \multirow{2}{*}{$p=5$} & 5000 & 0.027 & 0.888 & 0.200 & 0.447 & 1.428 & 0.840 & 5.662 & 3.584 \\
 & 50000 & 0.018 & 0.907 & 0.213 & 0.478 & 1.461 & 0.843 & 6.390 & 3.606 \\\hline
\rowcolor{blue!10}  & 5000 & 0.090 & 0.908 & 0.180 & 0.490 & 2.968 & 0.707 & 29.589 & 16.303\\
\rowcolor{blue!10} \multirow{-2}{*}{$p=7$} & 50000 & 0.034 & 0.902 & 0.206 & 0.457 & 1.628 & 0.872 & 7.568 & 3.876\\\hline
 \end{tabular}
\label{tab:off:error}
\end{table}

\begin{figure}[h]
\centering
\subfigure[MCMC truth]{\begin{tikzpicture}
\begin{axis}[enlargelimits=false, axis on top, ticks=none, height=\jointPlotHeight,width=\jointPlotWidth]
\addplot graphics [xmin=-0.650000,xmax=1.500000,ymin=-0.200000,ymax=2.500000] {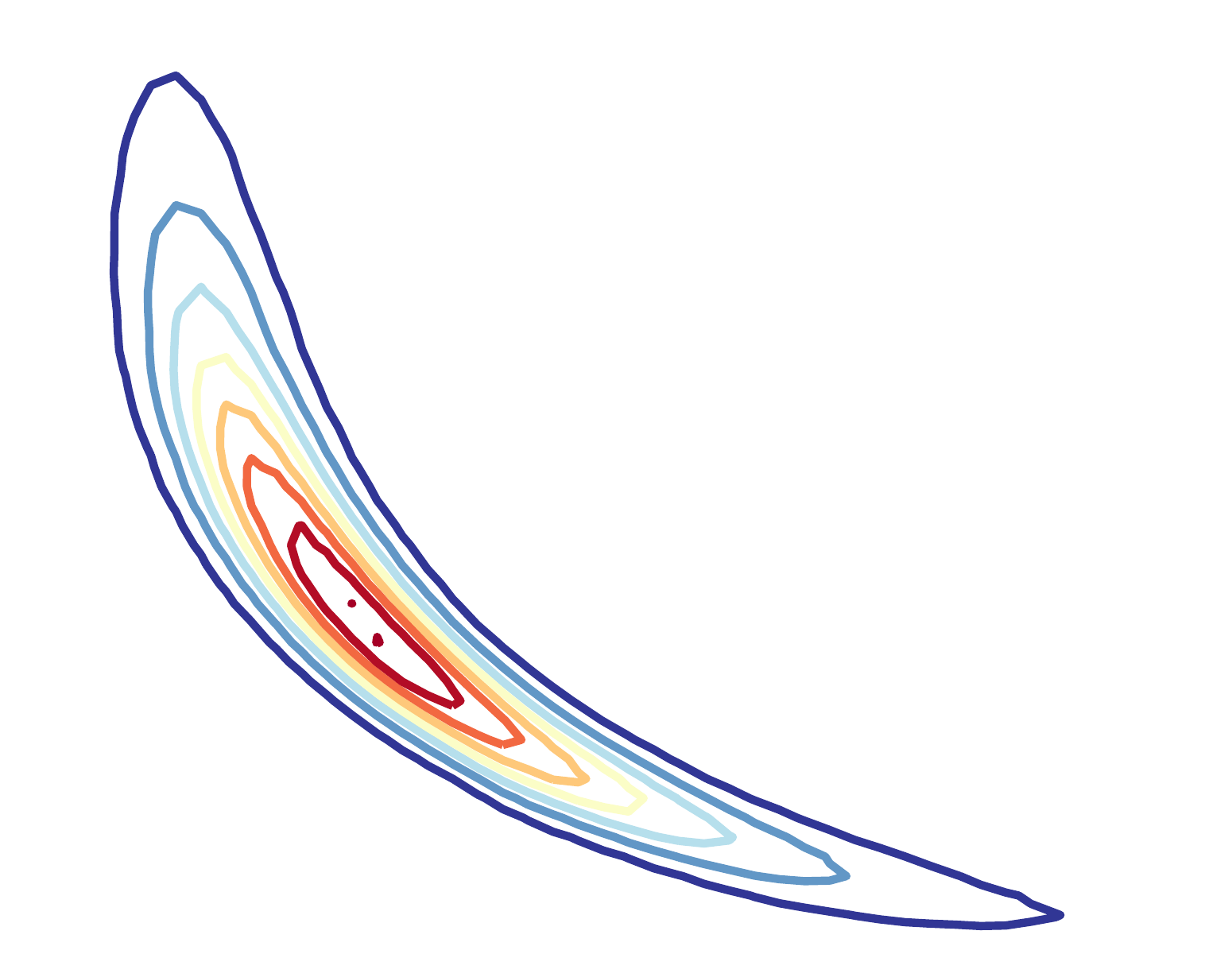};
\end{axis}
\end{tikzpicture}}\subfigure[degree $p=3$]{\begin{tikzpicture}
\begin{axis}[enlargelimits=false, axis on top, ticks=none, height=\jointPlotHeight,width=\jointPlotWidth]
\addplot graphics [xmin=-0.650000,xmax=1.500000,ymin=-0.200000,ymax=2.500000] {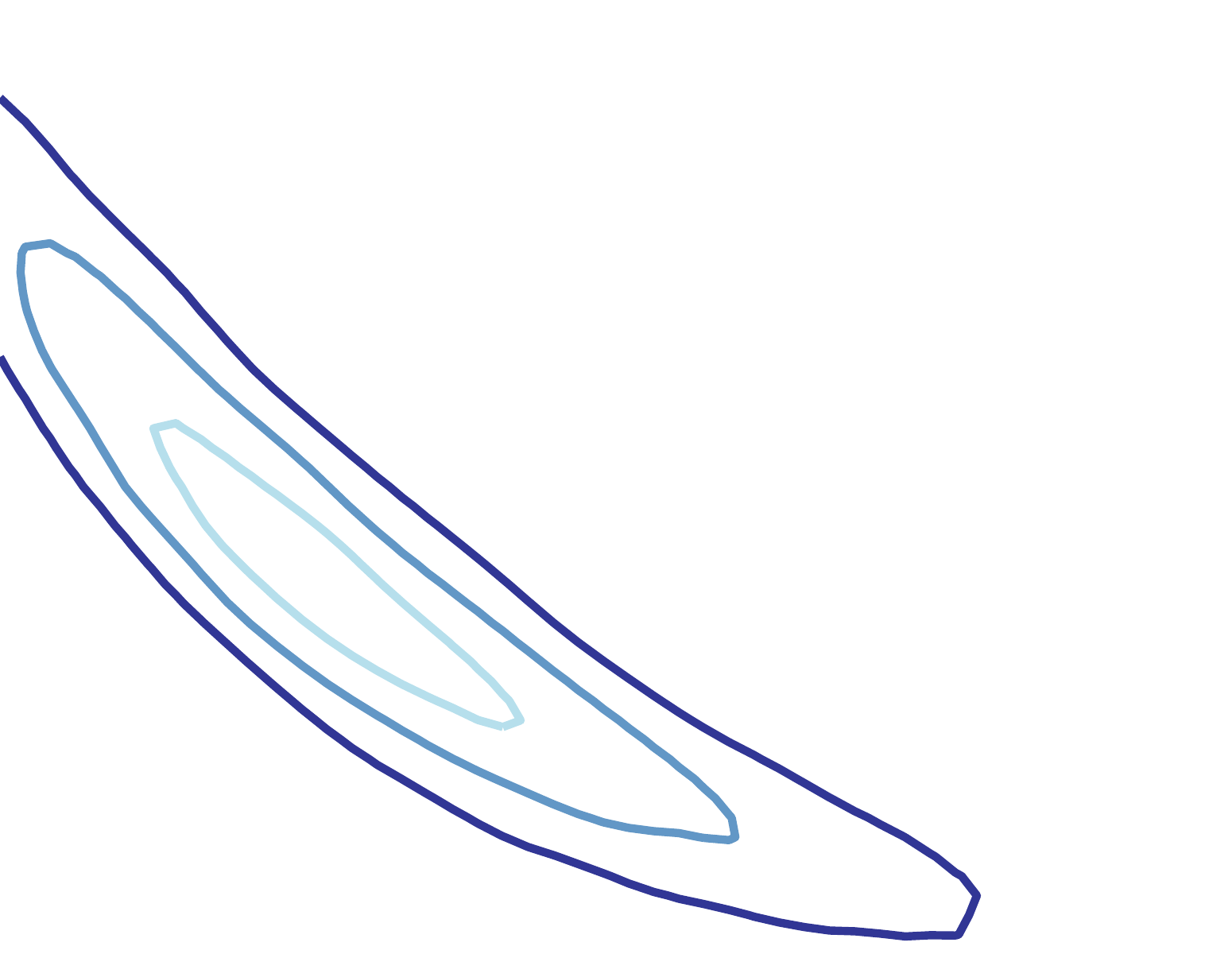};
\end{axis}
\end{tikzpicture}}
\subfigure[degree $p=5$]{\begin{tikzpicture}
\begin{axis}[enlargelimits=false, axis on top, ticks=none, height=\jointPlotHeight,width=\jointPlotWidth]
\addplot graphics [xmin=-0.650000,xmax=1.500000,ymin=-0.200000,ymax=2.500000] {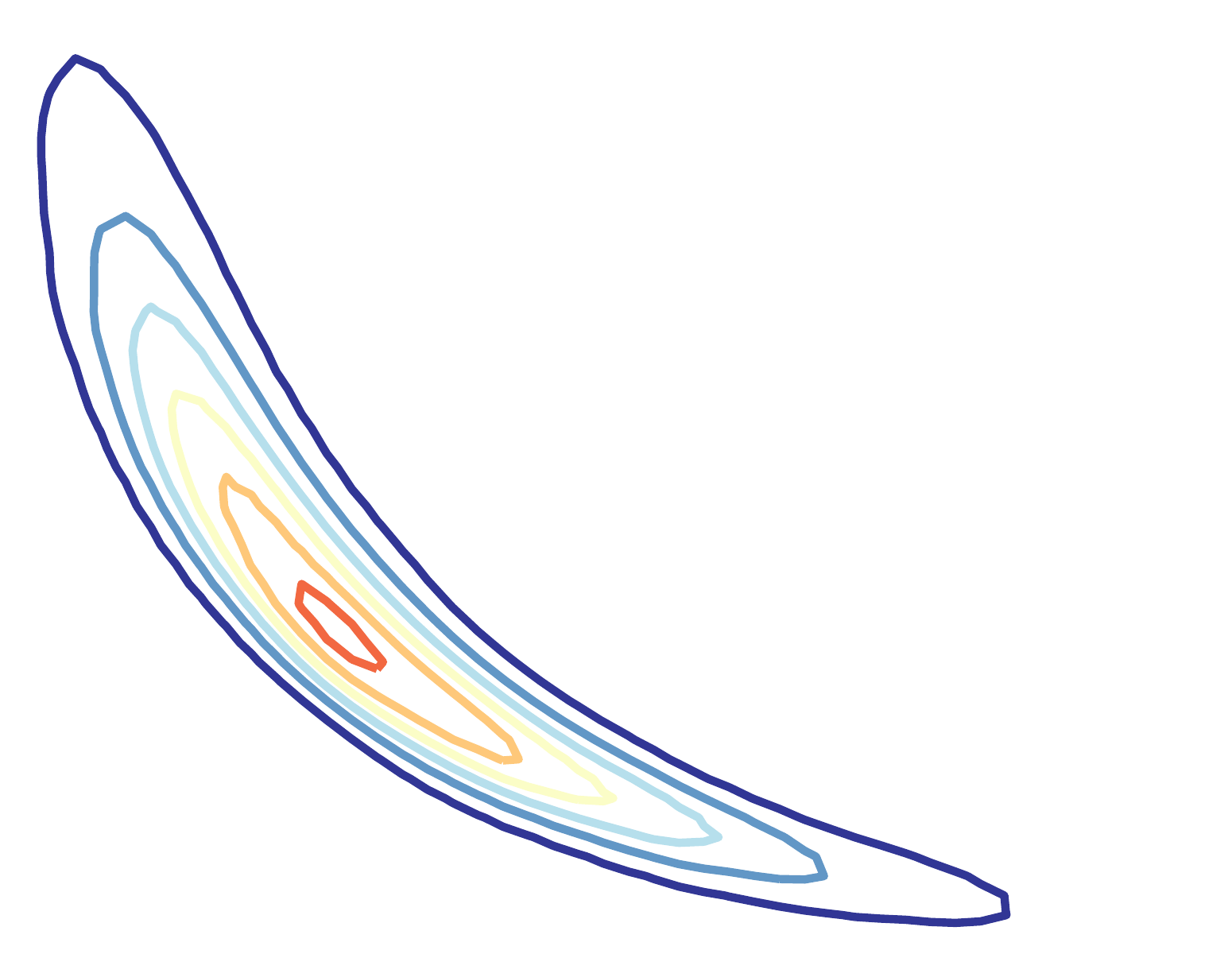};
\end{axis}
\end{tikzpicture}}\subfigure[degree $p=7$]{\begin{tikzpicture}
\begin{axis}[enlargelimits=false, axis on top, ticks=none, height=\jointPlotHeight,width=\jointPlotWidth]
\addplot graphics [xmin=-0.650000,xmax=1.500000,ymin=-0.200000,ymax=2.500000] {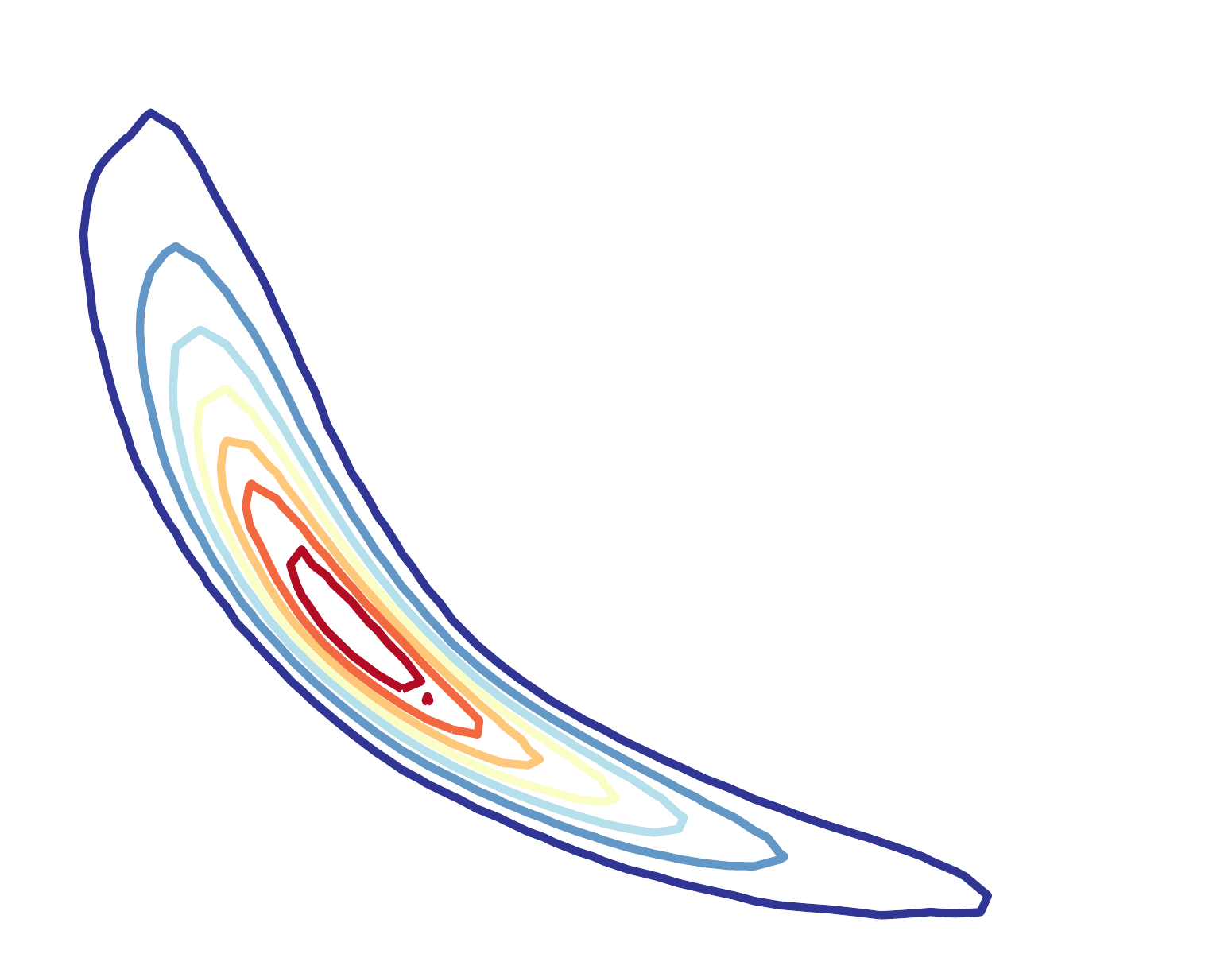};
\end{axis}
\end{tikzpicture}}

\caption[Approximate BOD posterior densities constructed with 50000 samples.]{
BOD problem of Section \ref{s:BODinverse} via inverse transport and conditioning.
Kernel density estimates of the conditional density $\ntd_{\paramt \vert \data=\datai}$, 
for $\datai=[0.18,   0.32,   0.42,   0.49,   0.54]$, 
using $50\, 000$ samples from either a reference adaptive MCMC sampler ({\it left})
or conditioned transport maps of varying total degree. Contour levels and color scales are constant for all figures.}
\label{fig:approxPost50000}
\end{figure}

\begin{table}[h]
\centering
\caption[Efficiency of offline inference maps on the BOD problem.]{
BOD problem of Section \ref{s:BODinverse} via inverse transport and conditioning. 
Efficiency of approximate Bayesian inference with inverse transport map from samples.
The  ``offline''  time is defined as the time it takes to 
compute an approximate inverse transport $\ifmap$ and a corresponding
approximate direct transport $\imap$ via regression (see Section \ref{sec:SampConst}).
The ``online'' time is the time required after observing 
$\{ \data = \datai \}$ to generate the equivalent of $30\, 000$ independent samples
from the conditional $\ntd_{\paramt \vert \data=\datai}$.
For MCMC, the ``online'' time is the average amount of time it takes to generate a chain with an effective sample size of $30\, 000$. 
}
\medskip
\footnotesize
\begin{tabular}{|c|c|r|r|r|}\hline
\multirow{2}{*}{map type} & \# training & \multicolumn{2}{c|}{Offline time (sec)} &  \multirow{2}{*}{Online time (sec)} \\
& samples & $\ifmap$ construction & $\imap$ regression &   \\\hline
\rowcolor{black!10} \multicolumn{2}{|c|}{MCMC ``truth''} &  \multicolumn{2}{c|}{NA} & 591.17 \\\hline
\multirow{2}{*}{$p=1$} & 5000 & 0.46 & 0.18 & 2.60\\
&  50000 &  4.55 & 1.65 & 2.32\\\hline
\rowcolor{blue!10} & 5000 & 4.13 & 1.36 & 3.54\\
\rowcolor{blue!10} \multirow{-2}{*}{$p=3$} &  50000 & 40.69 & 18.04 & 3.58\\\hline
\multirow{2}{*}{$p=5$} & 5000 & 22.82 & 8.40 & 5.80\\
&  50000 & 334.25 & 103.47 & 6.15\\\hline
\rowcolor{blue!10}& 5000 & 145.00 & 40.46 & 8.60\\
\rowcolor{blue!10}  \multirow{-2}{*}{$p=7$}  &  50000 & 1070.67 & 432.95 & 8.83\\\hline
\end{tabular}
\label{tab:off:time}
\end{table}

\subsection{Direct transport: map from densities}
\label{s:BODdirect}
We now focus on the computation of a direct transport as described in Section \ref{sec:DensConst}, using evaluations of an unnormalized target density.  Our goal here is to characterize a monotone increasing triangular transport map $T$ that pushes forward a standard normal reference density $\rd$ to the posterior distribution $\ntd_{\paramt \vert \data=\datai}$, i.e., the distribution of the BOD model parameters conditioned on a realization of the data $\datai$. We use the same realization of the data as in Section \ref{s:BODinverse}.

Figure \ref{fig:2dreftar} shows the reference and target densities. Notice that the target density exhibits a nonlinear dependence structure. This type of locally varying correlation can make sampling via standard MCMC methods (e.g., Metropolis-Hastings schemes with Gaussian proposals) quite challenging. In this example, the log-target density can be evaluated analytically up to a normalizing constant but direct sampling from the target distribution is impossible.  Thus it is an ideal setting for the computation of the direct transport as the minimizer of the optimization problem \eqref{OptimDirect}. We just need to specify how to approximate integration with respect to the reference measure in the objective of \eqref{OptimDirect}, and to choose a finite dimensional approximation space $\spaceMapT^h \subset \spaceMapT$ for the triangular map. Here we will approximate integration with respect to the reference measure using a tensor product of ten-point Gauss-Hermite quadrature rules. For this two-dimensional problem, this corresponds to $100$ integration nodes 
(i.e., $M=100$ in \eqref{OptimDirectApprox}). For $\spaceMapT^h$ we will employ a total-degree space of multivariate Hermite polynomials. In particular, we focus on third- and fifth-degree maps. The monotonicity constraint in \eqref{OptimDirect} is discretized pointwise at the integration nodes as explained in Section \ref{s:mapMonotone}. 

\begin{figure}[h]
\centering
\subfigure[Reference density]{\begin{tikzpicture}
\begin{axis}[enlargelimits=false, axis on top, ticks=none, height=\jointPlotHeight,width=\jointPlotWidth]
\addplot graphics [xmin=-2.200000,xmax=2.200000,ymin=-2.200000,ymax=2.200000] {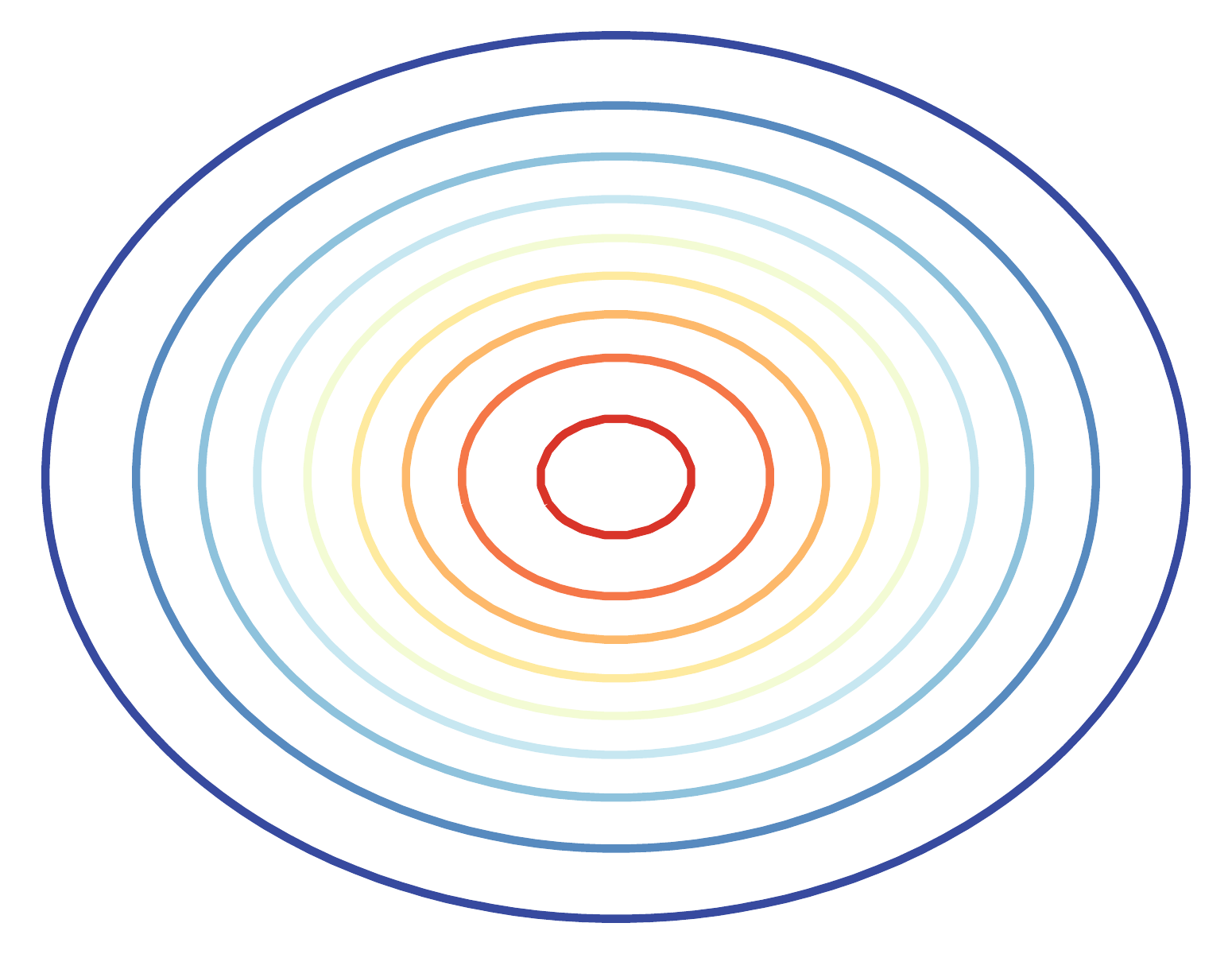};
\end{axis}
\end{tikzpicture}}
\subfigure[Target density]{\begin{tikzpicture}
\begin{axis}[enlargelimits=false, axis on top, ticks=none, height=\jointPlotHeight,width=\jointPlotWidth]
\addplot graphics [xmin=-0.650000,xmax=1.500000,ymin=-0.200000,ymax=2.500000] {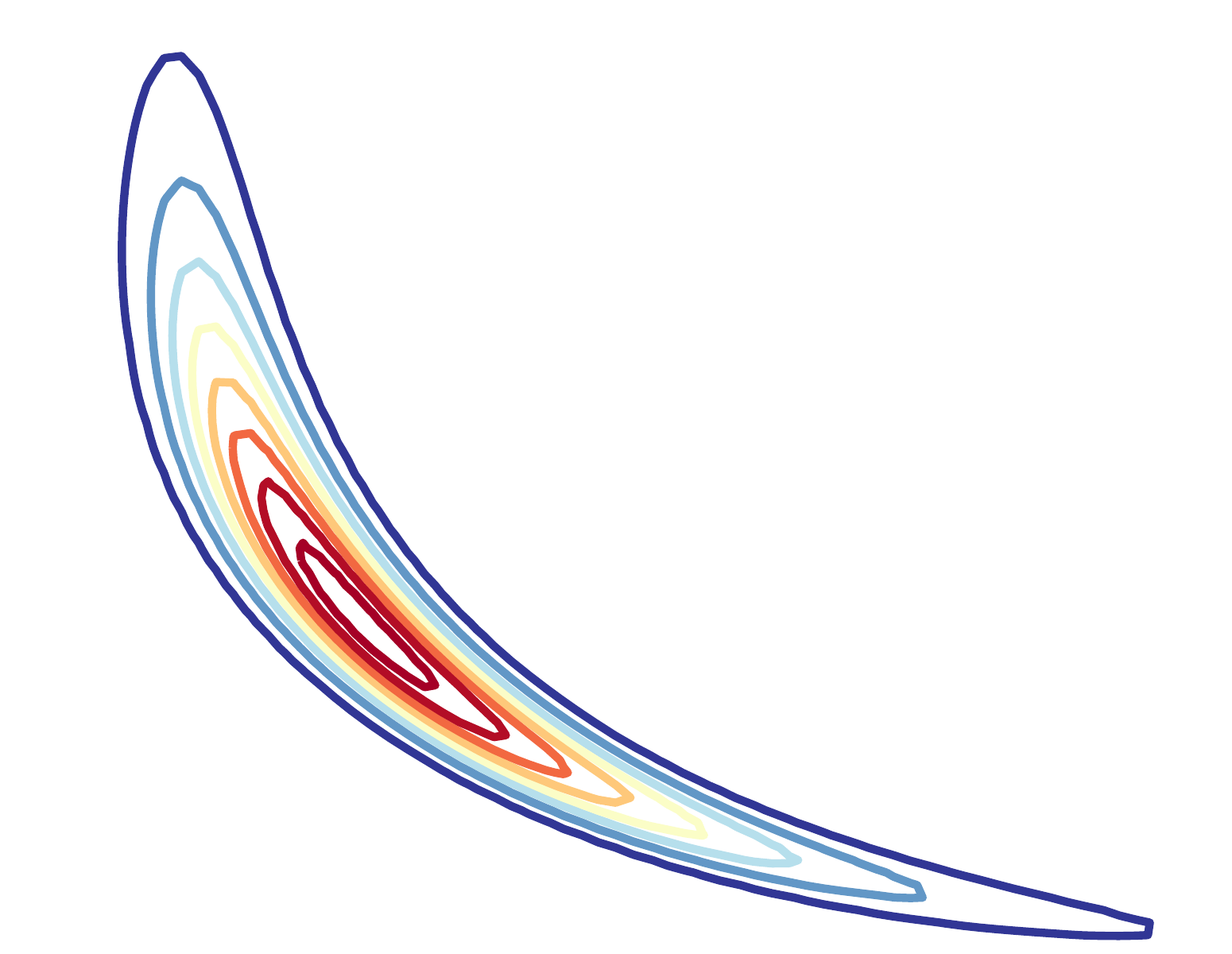};
\end{axis}
\end{tikzpicture}}
\caption[Reference and target densitied BOD2]{
BOD problem of Section \ref{s:BODdirect} via direct transport. 
Observations are taken at times $t\in \{1,2,3,4,5\}$. The observed data vector is given by $\datai=[0.18;   0.32;   0.42;   0.49;   0.54]$.}
\label{fig:2dreftar}
\end{figure}

Figure \ref{fig:BOD2pushforwards} shows the results of solving the discretized optimization problem \eqref{OptimDirectApprox} for the transport map. In particular, we show the pushforward of the reference density through the transport maps found by solving \eqref{OptimDirectApprox} for different truncations of $\spaceMapT^h$. As we can see from Figure \ref{fig:BOD2pushforwards}, an excellent approximation of the target density is already achieved with a degree-three map  (see Figure \ref{fig:BOD2pushforwards}(b)). This approximation improves and is almost indistinguishable from the true target density for a degree-five map (see Figure \ref{fig:BOD2pushforwards}(c)).
Thus if we were to compute posterior expectations using the approximate map as explained in Section \ref{s:bias}, we would expect virtually zero-variance estimators with extremely small bias. Moreover, we can estimate this bias using \eqref{biasBound}. 

\begin{figure}[h] 
\centering
\subfigure[Target density]{\begin{tikzpicture}
\begin{axis}[enlargelimits=false, axis on top, ticks=none, height=\jointPlotHeight,width=\jointPlotWidth]
\addplot graphics [xmin=-0.650000,xmax=1.500000,ymin=-0.200000,ymax=2.500000] {figures/bodPlots/true_target_.pdf};
\end{axis}
\end{tikzpicture}}
\subfigure[Pushforward $p=3$]{\begin{tikzpicture}
\begin{axis}[enlargelimits=false, axis on top, ticks=none, height=\jointPlotHeight,width=\jointPlotWidth]
\addplot graphics [xmin=-0.650000,xmax=1.500000,ymin=-0.200000,ymax=2.500000] {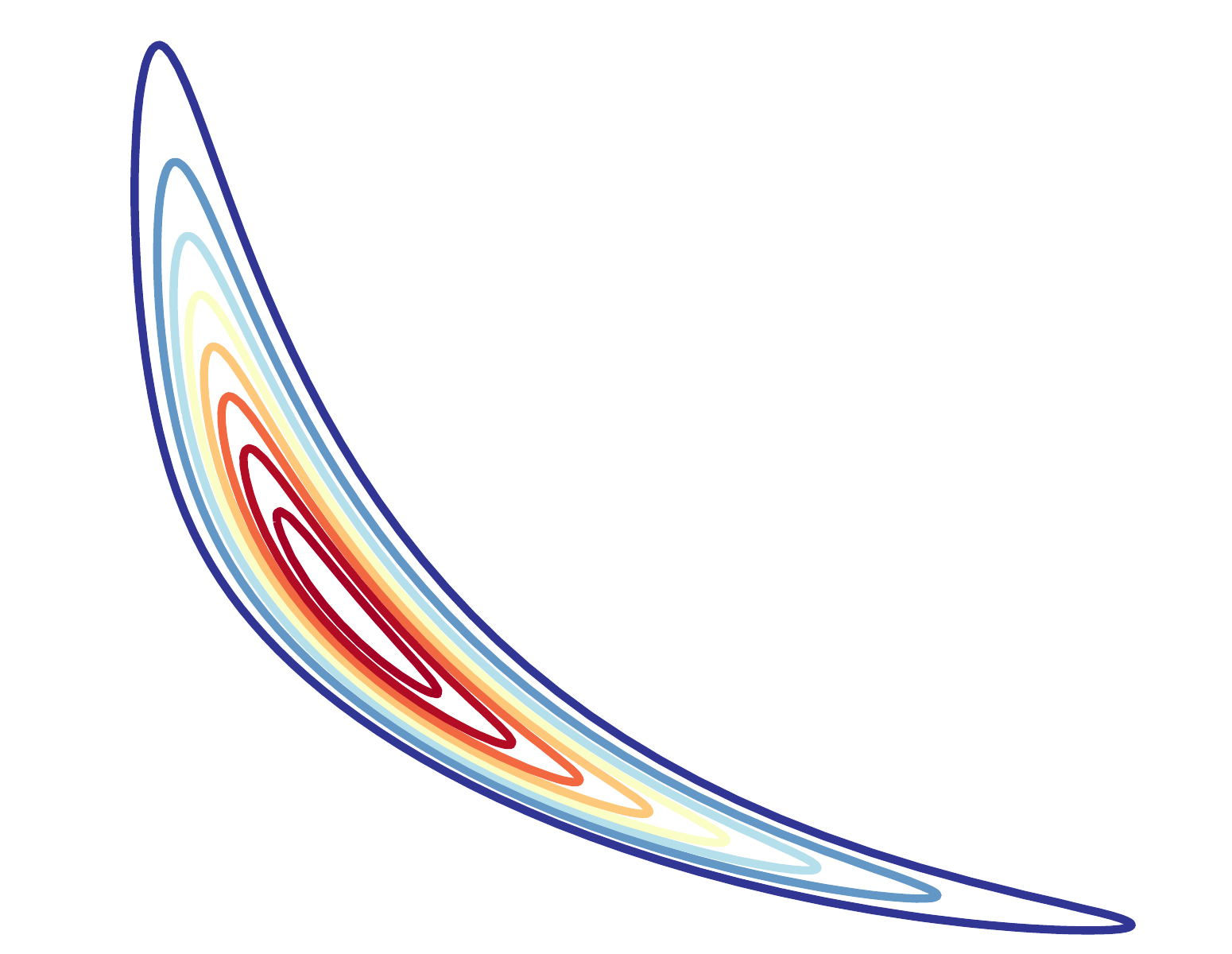};
\end{axis}
\end{tikzpicture}}
\subfigure[Pushforward $p=5$]{\begin{tikzpicture}
\begin{axis}[enlargelimits=false, axis on top, ticks=none, height=\jointPlotHeight,width=\jointPlotWidth]
\addplot graphics [xmin=-0.650000,xmax=1.500000,ymin=-0.200000,ymax=2.500000] {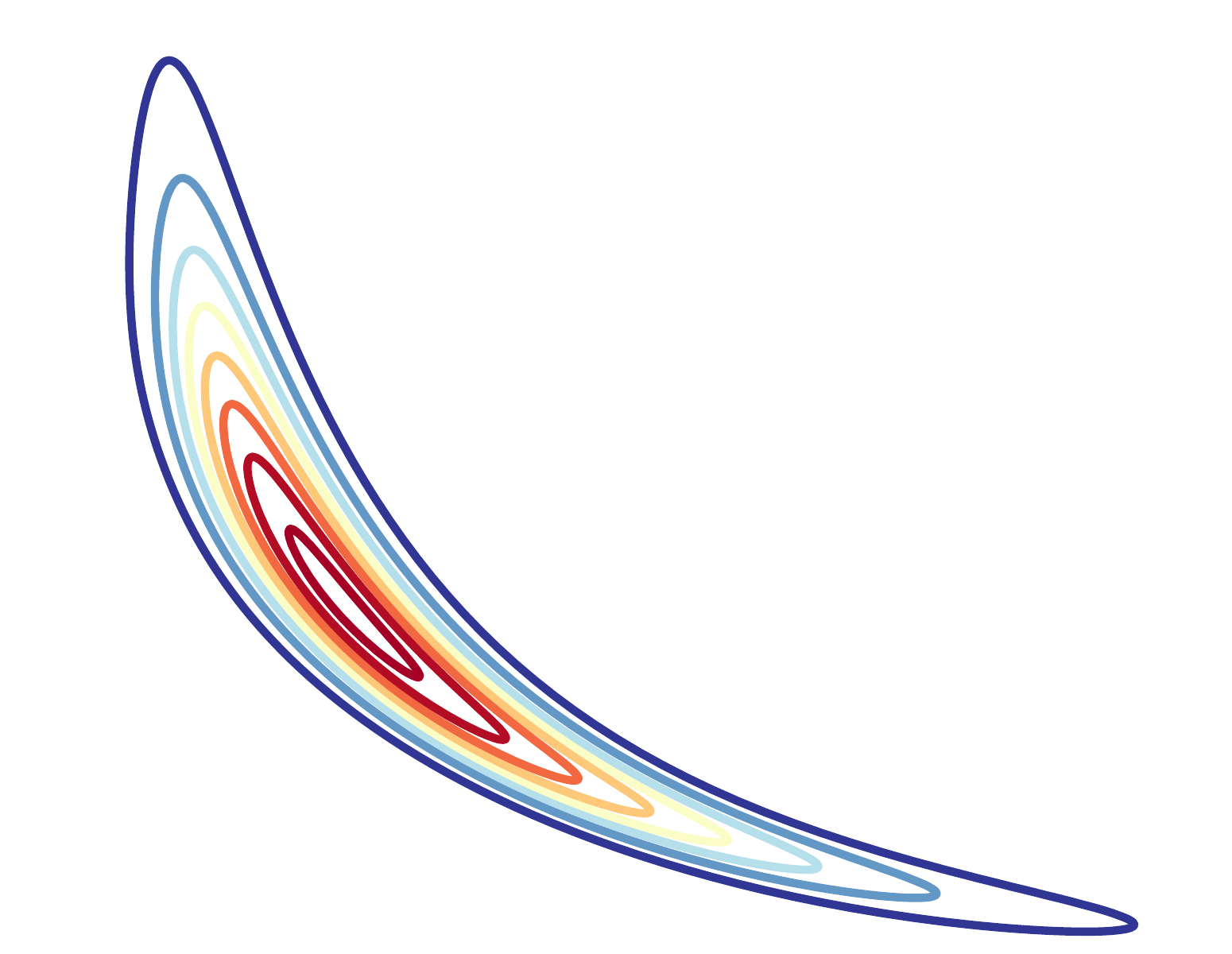};
\end{axis}
\end{tikzpicture}}
\caption[ push]{
BOD problem of Section \ref{s:BODdirect} via direct transport: pushforwards of the reference density under a given total--degree triangular map. The basis of the map consists of multivariate Hermite polynomials.
 The expectation with respect to the reference measure is approximated with a full tensor product Gauss-Hermite quadrature rule. The approximation is already excellent with a map of degree $p=3$.
}
\label{fig:BOD2pushforwards} 
\end{figure}

If one is not content with the bias of these estimators, then it is always possible to rely on asymptotically exact sampling of the pullback of the target distribution through the approximate map via, e.g., MCMC (see Section \ref{s:bias}). Figure \ref{fig:BOD2pullbacks} illustrates such pullback densities for different total-degree truncations of the polynomial space $\spaceMapT^h$. As we can see from Figure \ref{fig:BOD2pullbacks}(b,c), the pullback density is progressively ``Gaussianized'' as the degree of the transport  map increases. In particular, these pullbacks do not have the complex correlation structure of the original target density and are amenable to efficient sampling; for instance, even a Metropolis independence sampler \cite{RobertBook2004} could be very effective. Thus, approximate transport maps can effectively precondition and improve the efficiency of existing sampling techniques.

\begin{figure}[h]
\centering
\subfigure[Reference density]{\begin{tikzpicture}
\begin{axis}[enlargelimits=false, axis on top, ticks=none, height=\jointPlotHeight,width=\jointPlotWidth]
\addplot graphics [xmin=-2.200000,xmax=2.200000,ymin=-2.200000,ymax=2.200000] {figures/bodPlots/true_reference_.pdf};
\end{axis}
\end{tikzpicture}}
\subfigure[Pullback $p=3$]{\begin{tikzpicture}
\begin{axis}[enlargelimits=false, axis on top, ticks=none, height=\jointPlotHeight,width=\jointPlotWidth]
\addplot graphics [xmin=-2.200000,xmax=2.200000,ymin=-2.200000,ymax=2.200000] {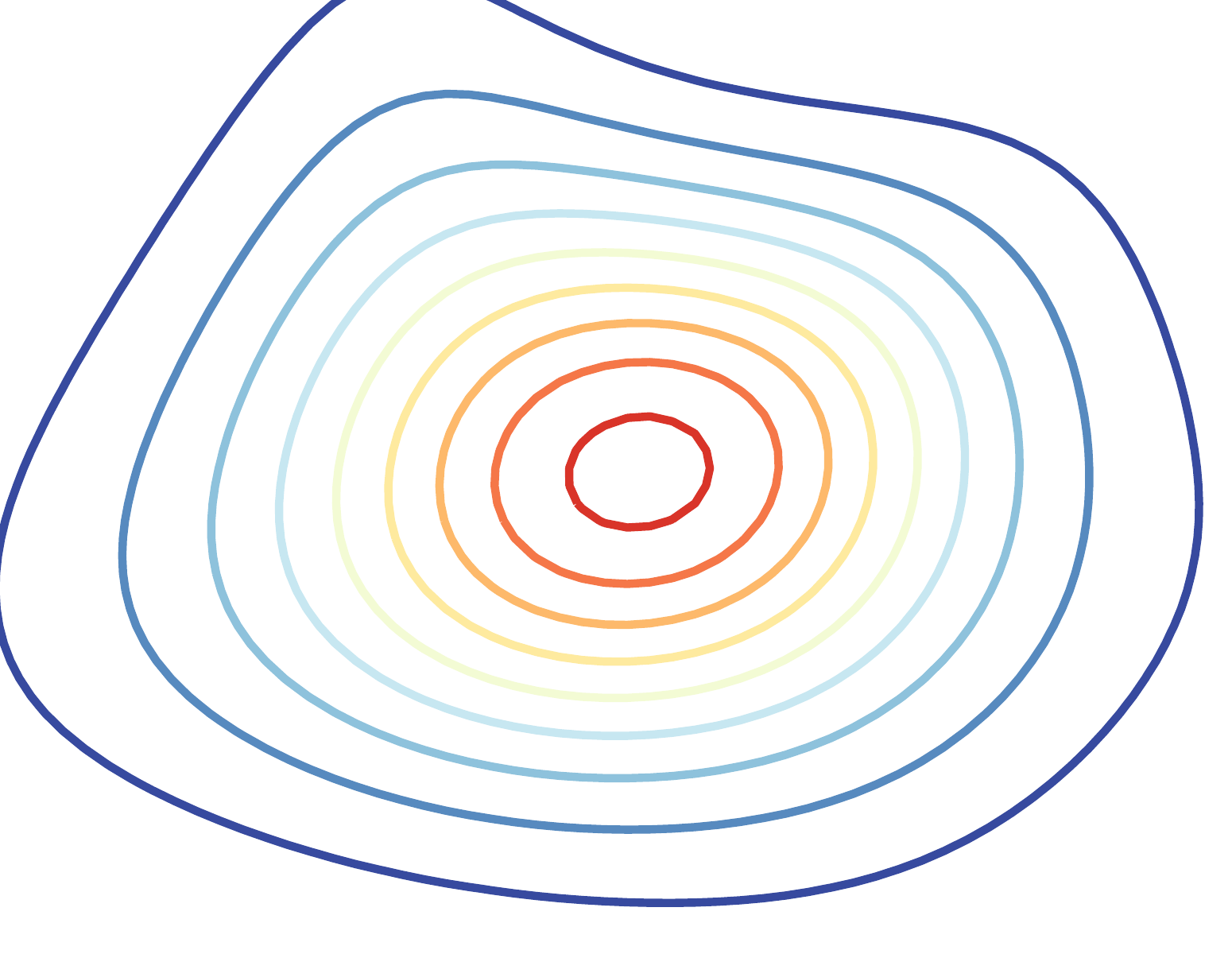};
\end{axis}
\end{tikzpicture}}
\subfigure[Pullback $p=5$]{\begin{tikzpicture}
\begin{axis}[enlargelimits=false, axis on top, ticks=none, height=\jointPlotHeight,width=\jointPlotWidth]
\addplot graphics [xmin=-2.200000,xmax=2.200000,ymin=-2.200000,ymax=2.200000] {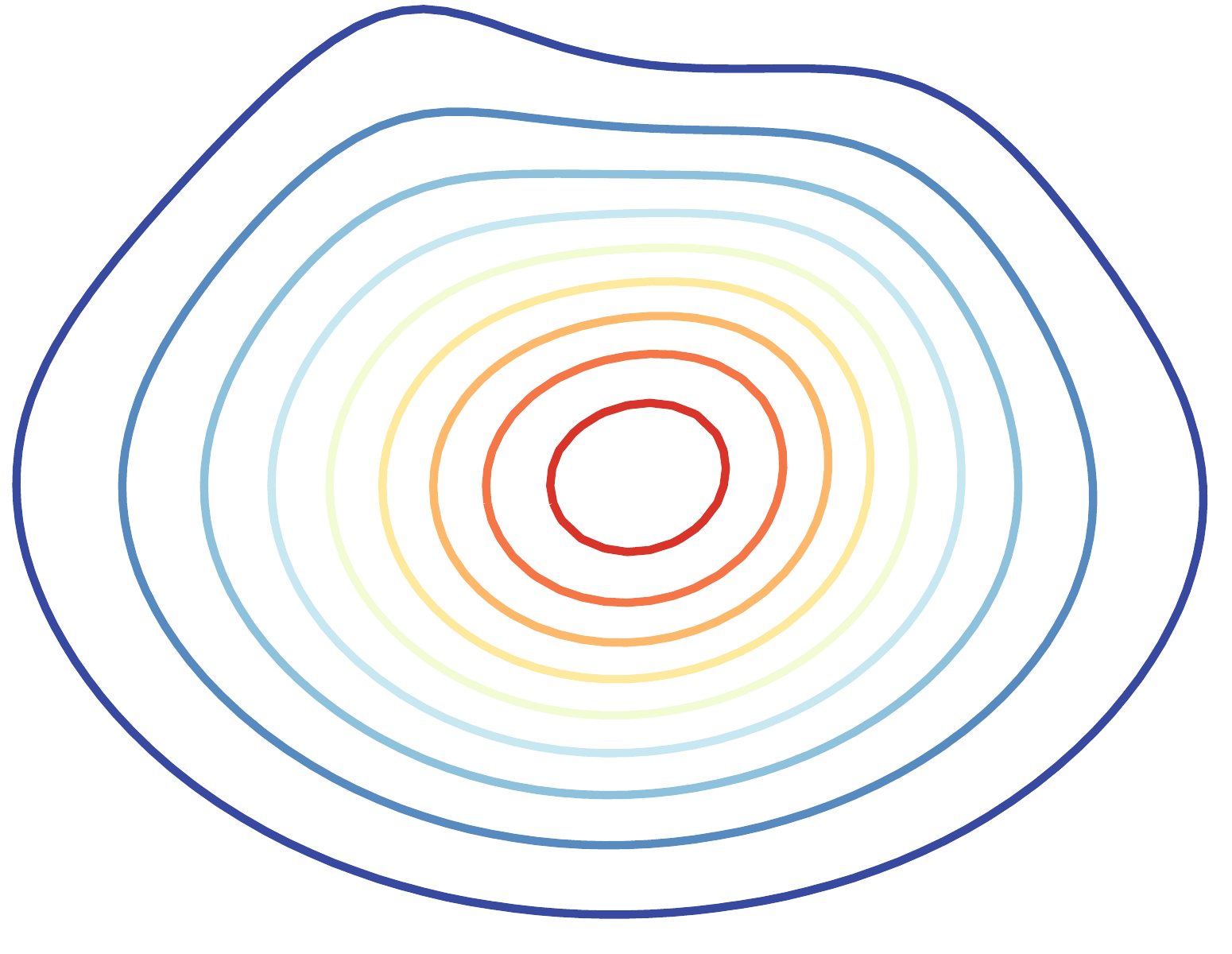};
\end{axis}
\end{tikzpicture}}
\caption[ ]{
BOD problem of Section \ref{s:BODdirect} via direct transport:
 pullbacks under a given total order triangular map of the target density. 
Same setup of the optimization problem as in Figure \ref{fig:BOD2pushforwards}.
The pullback density is progressively ``Gaussianized'' as the 
degree of the transport  map
increases. 
}
\label{fig:BOD2pullbacks} 
\end{figure}

\section{Conclusions and outlook}
\label{sec:Conc}

In this chapter, we reviewed the fundamentals of the measure transport
approach to sampling.  The idea is simple but powerful. Assume that we
wish to sample a given, possibly non--Gaussian, target measure. We solve this problem by constructing a deterministic
transport map that pushes forward a reference measure to the target
measure. The reference can be any measure from which we can easily draw samples or 
construct quadratures (e.g., a standard Gaussian).  Under these
assumptions, pushing forward independent samples from the reference
through the transport map produces independent samples from the
target. This construction turns sampling into a trivial task: we only
need to evaluate a deterministic function. Of course, the challenge is
now to determine a suitable transport.  Though the existence of such
transports is guaranteed under weak conditions
\cite{Villani2009}, in this chapter we focused on target and reference
measures that are absolutely continuous with respect to the Lebesgue
measure, with smooth and positive densities.  These hypotheses make
the numerical computation of a continuous transport map particularly
attractive.  It turns out that a smooth triangular transport, the
Knothe--Rosenblatt rearrangement \cite{Rosenblatt1952,Carlier2010}, can
be computed via smooth and possibly unconstrained optimization.

To compute this transport, we considered two different scenarios.  In
Section \ref{sec:DensConst} we addressed the computation of a monotone
triangular \textit{direct transport}---a transport map that pushes
forward a reference measure to the target measure---given only the
ability to evaluate the unnormalized target density
\cite{Moselhy2011}.  This situation is very common in the context of
Bayesian inference.  The direct transport can be computed by solving a
smooth optimization problem using standard gradient-based techniques.  In Section
\ref{sec:SampConst}, on the other hand, we focused on a setting where
the target density is unavailable and we are instead given only
finitely many samples from the target distribution.  This scenario
arises, for instance, in density estimation \cite{Tabak2013}
or Bayesian inference with intractable likelihoods
\cite{wilkinson2011stochastic,marin2012approximate,Csillery2010}.  In
this setting, we showed that a monotone triangular \textit{inverse
transport}---a transport map that pushes forward the target measure to
the reference measure---can be computed efficiently via separable
convex optimization.  The direct transport can then be evaluated by
solving a nonlinear triangular system via a sequence of
one-dimensional root-findings (see Section \ref{sec:SampMap:inverse}).
Moreover, we showed that characterizing the target distribution as the
pushforward of a triangular transport enables efficient sampling from
particular conditionals (and of course any marginal) of the target
(see Section \ref{sec:CondSamp}). This feature can be extremely useful
in the context of online Bayesian inference, where one is concerned
with fast posterior computations for multiple realizations of the data
(see Section \ref{s:BODinverse}).

\medskip Ongoing efforts aim to expand the transport map framework by:
(1) understanding the fundamental \textit{structure} of transports,
and how this structure flows from certain properties of the target
measure; (2) developing rigorous and automated methods for the
adaptive refinement of maps; and (3) coupling these methods with more
effective parameterizations and computational approaches.

An important preliminary issue, which we discussed briefly in
Section~\ref{s:mapMonotone}, is how to enforce {\bf monotonicity} and
thus invertibility of the transport. In general, there is no easy way
to parameterize a monotone map. However, as shown in Section
\ref{s:mapMonotone} and detailed in \cite{bigoni2016monotone}, if we restrict our
attention to triangular transports---that is, if we consider the
computation of a Knothe--Rosenblatt rearrangement---then the
monotonicity constraint can be enforced {\it strictly} in the
parameterization of the map.  This result is inspired by monotone
regression techniques \cite{ramsay1998estimating} and is useful in the
transport map framework as it removes explicit monotonicity
constraints altogether, enabling the use of unconstrained
optimization techniques.

Another key challenge is the need to construct low-dimensional parameterizations of transport maps in high-dimensional settings. The critical observation in \cite{spantini16markov} is that Markov properties---i.e., the {conditional independence} structure---of the target distribution induce an intrinsic low dimensionality of the transport map in terms of \textbf{sparsity}  and \textbf{decomposability}. A sparse transport is a multivariate map where each component is only a function of few input variables, whereas a decomposable transport is a map that can be written as the {\it exact} composition of a finite number of simple functions. The analysis in \cite{spantini16markov} reveals that these sparsity and decomposability properties can be predicted {\it before} computing the actual transport simply by examining the Markov structure of the target distribution. These properties can then be explicitly enforced in the parameterization of candidate transport maps, leading to optimization problems of considerably reduced dimension. Note that there is a constant effort in applications to formulate probabilistic models of phenomena of interest using sparse Markov structures; one prominent example is multiscale modeling \cite{Parno2015}. 
A further source of low dimensionality in transports is \textbf{low-rank structure}, i.e., situations where a map departs from the identity only on a low-dimensional subspace of the input space \cite{spantini16markov}. This situation is fairly common in large-scale Bayesian inverse problems where the data are informative, relative to the prior, only about a handful of directions in the parameter space \cite{Cui2014lis,spantini2014optimal}.

Building on these varieties of low-dimensional structure, we still need to construct {\it explicit} representations of the transport. In this chapter, we have opted for a parametric paradigm, seeking the transport map within a finite-dimensional approximation class. Parameterizing high-dimensional functions is broadly challenging (and can rapidly become intractable), but exploiting the sparsity, decomposability, and low-rank structure of transports can dramatically reduce the burden associated with explicit representations. Within any structure of this kind, however, we would still like to introduce the fewest degrees of freedom possible: for instance, we may know that a component of the map should depend only on a small subset of the input variables, but what are the best basis functions to capture this dependence?  A possible approach is the \textbf{adaptive enrichment} of the approximation space of the map during the optimization routine. The main question is how to drive the enrichment. A standard approach is to compute the gradient of the objective of the optimization problem over a slightly richer approximation space and to detect the new degrees of freedom that should be incorporated in the parameterization of the transport. This is in the same spirit as adjoint--based techniques in adaptive finite element methods for differential equations \cite{bangerth2013adaptive}. In the context of transport maps, however, it turns out that one can  {\it exactly} evaluate the first variation of the objective over an infinite-dimensional function space containing the transport. A rigorous and systematic analysis of this first variation can guide targeted enrichment of the approximation space for the map \cite{bigoni2016monotone}. Alternatively, one could try to construct rather complex transports by \textbf{composing} simple maps and rotations of the space. This idea has proven successful in high-dimensional applications (see \cite{Parno2014thesis} for the details of an algorithm, and \cite{Tabak2013,Laparra2011} for related approaches). 

Even after finding efficient parameterizations that exploit available
low-dimensional structure, we must still search for the best transport
map within a finite-dimensional approximation space. As a result, our
transports will in general be only approximate. This fact should not
be surprising or alarming. It is the same issue that one faces, for
instance, when solving a differential equation using the finite
element method \cite{strang1973analysis}. The important feature of the
transport map framework, however, is that we can estimate the quality
of an approximate transport and decide whether to enrich the
approximation space to improve the accuracy of the map, or to accept
the bias resulting from use of an approximate map to sample the target
distribution. In Section \ref{s:bias} we reviewed many properties and
possible applications of approximate transports.  Perhaps the most
notable is the use of approximate maps to precondition existing
sampling techniques such as MCMC. In particular, we refer to
\cite{Parno2015mcmc} for a use of approximate transport maps in the
context of adaptive MCMC, where a low-order map is learned from MCMC
samples and used to construct efficient non-Gaussian proposals that
allow long-range global moves even for highly correlated targets.

So far, the transport map framework has been deployed successfully in
a number of challenging applications: high-dimensional non-Gaussian
Bayesian inference involving expensive forward models
\cite{Moselhy2011}, multiscale methods for Bayesian inverse problems
\cite{Parno2015}, non-Gaussian proposals for MCMC algorithms
\cite{Parno2015mcmc}, and Bayesian optimal experimental design
\cite{huan16maps}.
Ongoing and future applications of the framework include sequential
data assimilation (Bayesian filtering and smoothing), statistical
modeling via non-Gaussian Markov random fields, density estimation and
inference in likelihood-free settings (e.g., with radar and image
data), and rare event simulation.

\bibliographystyle{spmpsci}
\bibliography{MapChapter,ymmextrarefs}

\begin{thebibliography}{10}
\providecommand{\url}[1]{{#1}}
\providecommand{\urlprefix}{URL }
\expandafter\ifx\csname urlstyle\endcsname\relax
  \providecommand{\doi}[1]{DOI~\discretionary{}{}{}#1}\else
  \providecommand{\doi}{DOI~\discretionary{}{}{}\begingroup
  \urlstyle{rm}\Url}\fi

\bibitem{adams1996measure}
Adams, M.R., Guillemin, V.: Measure theory and probability.
\newblock Springer (1996)

\bibitem{Ambrosio2013}
Ambrosio, L., Gigli, N.: A user's guide to optimal transport.
\newblock In: Modelling and Optimisation of Flows on Networks, pp. 1--155.
  Springer (2013)

\bibitem{Andrieu2006}
Andrieu, C., Moulines, E.: {On the ergodicity properties of some adaptive MCMC
  algorithms}.
\newblock The Annals of Applied Probability \textbf{16}(3), 1462--1505 (2006)

\bibitem{angenent2003minimizing}
Angenent, S., Haker, S., Tannenbaum, A.: Minimizing flows for the
  {M}onge--{K}antorovich problem.
\newblock SIAM journal on mathematical analysis \textbf{35}(1), 61--97 (2003)

\bibitem{atkins2012implicit}
Atkins, E., Morzfeld, M., Chorin, A.J.: Implicit particle methods and their
  connection with variational data assimilation.
\newblock Monthly Weather Review \textbf{141}(6), 1786--1803 (2013)

\bibitem{attias1999inferring}
Attias, H.: Inferring parameters and structure of latent variable models by
  variational {B}ayes.
\newblock In: Proceedings of the Fifteenth conference on Uncertainty in
  Artificial Intelligence, pp. 21--30. Morgan Kaufmann Publishers Inc. (1999)

\bibitem{bangerth2013adaptive}
Bangerth, W., Rannacher, R.: Adaptive finite element methods for differential
  equations.
\newblock Birkh{\"a}user (2013)

\bibitem{Bardsley2014}
Bardsley, J.M., Solonen, A., Haario, H., Laine, M.: {Randomize-then-optimize}:
  A method for sampling from posterior distributions in nonlinear inverse
  problems.
\newblock {SIAM Journal on Scientific Computing} \textbf{36}(4), A1895--A1910
  (2014)

\bibitem{beaumont2002approximate}
Beaumont, M.A., Zhang, W., Balding, D.J.: Approximate {B}ayesian computation in
  population genetics.
\newblock Genetics \textbf{162}(4), 2025--2035 (2002)

\bibitem{benamou2000computational}
Benamou, J.D., Brenier, Y.: A computational fluid mechanics solution to the
  {M}onge-{K}antorovich mass transfer problem.
\newblock Numerische Mathematik \textbf{84}(3), 375--393 (2000)

\bibitem{Bernard2004}
Bernard, P., Buffoni, B.: Optimal mass transportation and {Mather} theory.
\newblock Journal of the European Mathematical Society \textbf{9}, 85--121
  (2007)

\bibitem{bigoni2016monotone}
Bigoni, D., Spantini, A., Marzouk, Y.: On the computation of monotone
  transports.
\newblock Preprint  (2016)

\bibitem{Bonnotte2013}
Bonnotte, N.: {From Knothe's rearrangement to Brenier's optimal transport map}.
\newblock SIAM Journal on Mathematical Analysis \textbf{45}(1), 64--87 (2013)

\bibitem{Box1964}
Box, G., Cox, D.: {An analysis of transformations}.
\newblock Journal of the Royal Statistical Society Series B \textbf{26}(2),
  211--252 (1964)

\bibitem{Brenier1991}
Brenier, Y.: Polar factorization and monotone rearrangement of vector-valued
  functions.
\newblock Communications on Pure and Applied Mathematics \textbf{44}(4),
  375--417 (1991)

\bibitem{Brooks2011}
Brooks, S., Gelman, A., Jones, G., Meng, X.L. (eds.): Handbook of {Markov}
  chain {Monte Carlo}.
\newblock Chapman and Hall (2011)

\bibitem{calderhead2014general}
Calderhead, B.: A general construction for parallelizing
  {M}etropolis-{H}astings algorithms.
\newblock Proceedings of the National Academy of Sciences \textbf{111}(49),
  17,408--17,413 (2014)

\bibitem{Carlier2010}
Carlier, G., Galichon, A., Santambrogio, F.: From {Knothe's} transport to
  {Brenier's} map and a continuation method for optimal transport.
\newblock {SIAM Journal on Mathematical Analysis} \textbf{41}(6), 2554--2576
  (2010)

\bibitem{Champion2011}
Champion, T., De~Pascale, L.: The {Monge} problem in $\mathbb{R}^d$.
\newblock {Duke Mathematical Journal} \textbf{157}(3), 551--572 (2011)

\bibitem{chib2001marginal}
Chib, S., Jeliazkov, I.: Marginal likelihood from the {M}etropolis-{H}astings
  output.
\newblock Journal of the American Statistical Association \textbf{96}(453),
  270--281 (2001)

\bibitem{chorin2010implicit}
Chorin, A., Morzfeld, M., Tu, X.: Implicit particle filters for data
  assimilation.
\newblock Communications in Applied Mathematics and Computational Science
  \textbf{5}(2), 221--240 (2010)

\bibitem{chorin2009implicit}
Chorin, A.J., Tu, X.: Implicit sampling for particle filters.
\newblock Proceedings of the National Academy of Sciences \textbf{106}(41),
  17,249--17,254 (2009)

\bibitem{Csillery2010}
Csill\'{e}ry, K., Blum, M.G.B., Gaggiotti, O.E., Fran\c{c}ois, O.: {Approximate
  Bayesian Computation (ABC) in practice.}
\newblock Trends in ecology \& evolution \textbf{25}(7), 410--8 (2010)

\bibitem{CuiDILI2016}
Cui, T., Law, K.J.H., Marzouk, Y.M.: Dimension-independent likelihood-informed
  {MCMC}.
\newblock Journal of Computational Physics \textbf{304}(1), 109--137 (2016)

\bibitem{Cui2014lis}
Cui, T., Martin, J., Marzouk, Y.M., Solonen, A., Spantini, A.:
  Likelihood-informed dimension reduction for nonlinear inverse problems.
\newblock Inverse Problems \textbf{30}(11), 114,015 (2014)

\bibitem{delMoralSMC2006}
Del~Moral, P., Doucet, A., Jasra, A.: Sequential {M}onte {C}arlo samplers.
\newblock Journal of the Royal Statistical Society B \textbf{68}(3), 411--436
  (2006)

\bibitem{Feyel2004}
Feyel, D., {\"U}st{\"u}nel, A.S.: {Monge-Kantorovitch} measure transportation
  and {Monge-Ampere} equation on {Wiener} space.
\newblock Probability theory and related fields \textbf{128}(3), 347--385
  (2004)

\bibitem{fox2012tutorial}
Fox, C.W., Roberts, S.J.: A tutorial on variational {B}ayesian inference.
\newblock Artificial intelligence review \textbf{38}(2), 85--95 (2012)

\bibitem{Gautschi1996}
Gautschi, W.: {Orthogonal polynomials: applications and computation}.
\newblock Acta numerica \textbf{5}, 45--119 (1996)

\bibitem{Gelman2003}
Gelman, A., Carlin, J.B., Stern, H.S., Rubin, D.B.: {Bayesian data analysis}, 2
  edn.
\newblock Chapman and Hall (2003)

\bibitem{gelman1998simulating}
Gelman, A., Meng, X.L.: Simulating normalizing constants: From importance
  sampling to bridge sampling to path sampling.
\newblock Statistical science pp. 163--185 (1998)

\bibitem{ghorpade2010course}
Ghorpade, S., Limaye, B.V.: A course in multivariable calculus and analysis.
\newblock Springer (2010)

\bibitem{Gilks1996}
Gilks, W., Richardson, S., Spiegelhalter, D. (eds.): Markov chain {M}onte
  {C}arlo in practice.
\newblock Chapman and Hall (1996)

\bibitem{Girolami2011}
Girolami, M., Calderhead, B.: {Riemann manifold Langevin and Hamiltonian Monte
  Carlo methods}.
\newblock Journal of the Royal Statistical Society Series B \textbf{73}, 1--37
  (2011)

\bibitem{goodman2015small}
Goodman, J., Lin, K.K., Morzfeld, M.: Small-noise analysis and symmetrization
  of implicit {M}onte {C}arlo samplers.
\newblock Communications on Pure and Applied Mathematics  (2015)

\bibitem{GorhamMa15}
Gorham, J., Mackey, L.: Measuring sample quality with {S}tein's method.
\newblock In: Advances in Neural Information Processing Systems, pp. 226--234
  (2015)

\bibitem{Haario2001}
Haario, H., Saksman, E., Tamminen, J.: {An adaptive Metropolis algorithm}.
\newblock Bernoulli \textbf{7}(2), 223--242 (2001)

\bibitem{haber2010efficient}
Haber, E., Rehman, T., Tannenbaum, A.: An efficient numerical method for the
  solution of the ${L}_2$ optimal mass transfer problem.
\newblock SIAM Journal on Scientific Computing \textbf{32}(1), 197--211 (2010)

\bibitem{huan16maps}
Huan, X., Parno, M., Marzouk, Y.: Adaptive transport maps for sequential
  {B}ayesian optimal experimental design.
\newblock Preprint  (2016)

\bibitem{jaakkola2000bayesian}
Jaakkola, T.S., Jordan, M.I.: Bayesian parameter estimation via variational
  methods.
\newblock Statistics and Computing \textbf{10}(1), 25--37 (2000)

\bibitem{Kim2013}
Kim, S., Ma, R., Mesa, D., Coleman, T.P.: Efficient {B}ayesian inference
  methods via convex optimization and optimal transport.
\newblock In: IEEE Symposium on Information Theory, 6, pp. 2259--2263 (2013)

\bibitem{Kleywegt2002}
Kleywegt, A., Shapiro, A., {Homem-de-Mello}, T.: {The sample average
  approximation method for stochastic discrete optimization}.
\newblock SIAM Journal on Optimization \textbf{12}(2), 479--502 (2002)

\bibitem{Kushner2003}
Kushner, H., Yin, G.: Stochastic approximation and recursive algorithms and
  applications.
\newblock Springer (2003)

\bibitem{Laparra2011}
Laparra, V., Camps-Valls, G., Malo, J.: {Iterative gaussianization: from ICA to
  random rotations}.
\newblock IEEE Transactions on Neural Networks \textbf{22}(4), 1--13 (2011)

\bibitem{laurence2014constrained}
Laurence, P., Pignol, R.J., Tabak, E.G.: Constrained density estimation.
\newblock In: Quantitative Energy Finance, pp. 259--284. Springer (2014)

\bibitem{LeMaitre2010}
{Le Maitre}, O., Knio, O.M.: {Spectral methods for uncertainty quantification:
  with applications to computational fluid dynamics.}
\newblock Springer (2010)

\bibitem{litvinenko2013inverse}
Litvinenko, A., Matthies, H.G.: Inverse problems and uncertainty
  quantification.
\newblock arXiv:1312.5048  (2013)

\bibitem{litvinenko2013uncertainty}
Litvinenko, A., Matthies, H.G.: Uncertainty quantification and non-linear
  {B}ayesian update of {PCE} coefficients.
\newblock PAMM \textbf{13}(1), 379--380 (2013)

\bibitem{Liu2004}
Liu, J.S.: {Monte Carlo} strategies in scientific computing.
\newblock Springer, New York, New York, USA (2004)

\bibitem{loeper2005numerical}
Loeper, G., Rapetti, F.: Numerical solution of the {M}onge--{A}mp{\`e}re
  equation by a {N}ewton's algorithm.
\newblock Comptes Rendus Mathematique \textbf{340}(4), 319--324 (2005)

\bibitem{luenberger1968optimization}
Luenberger, D.G.: Optimization by vector space methods.
\newblock John Wiley \& Sons (1968)

\bibitem{marin2012approximate}
Marin, J.M., Pudlo, P., Robert, C.P., Ryder, R.J.: Approximate {B}ayesian
  computational methods.
\newblock Statistics and Computing \textbf{22}(6), 1167--1180 (2012)

\bibitem{Martin2012}
Martin, J., Wilcox, L., Burstedde, C., Ghattas, O.: A stochastic {Newton MCMC}
  method for large-scale statistical inverse problems with application to
  seismic inversion.
\newblock SIAM Journal on Scientific Computing \textbf{34}(3), 1460--1487
  (2012)

\bibitem{matthies2015inverse}
Matthies, H.G., Zander, E., Rosi{\'c}, B.V., Litvinenko, A., Pajonk, O.:
  Inverse problems in a {B}ayesian setting.
\newblock arXiv:1511.00524  (2015)

\bibitem{McCann1995}
McCann, R.: Existence and uniqueness of monotone measure-preserving maps.
\newblock Duke Mathematical Journal \textbf{80}(2), 309--323 (1995)

\bibitem{meng2002warp}
Meng, X.L., Schilling, S.: Warp bridge sampling.
\newblock Journal of Computational and Graphical Statistics \textbf{11}(3),
  552--586 (2002)

\bibitem{Monge1781}
Monge, G.: {M\'{e}moire sur la th\'{e}orie des d\'{e}blais et de remblais}.
\newblock In: Histoire de l'Acad\'{e}mie Royale des Sciences de Paris, avec les
  M\'{e}moires de Math\'{e}matique et de Physique pour la m\^{e}me ann\'{e}e,
  pp. 666--704 (1781)

\bibitem{morzfeld2011implicit}
Morzfeld, M., Chorin, A.J.: Implicit particle filtering for models with partial
  noise, and an application to geomagnetic data assimilation.
\newblock arXiv:1109.3664  (2011)

\bibitem{Morzfeld2012}
Morzfeld, M., Tu, X., Atkins, E., Chorin, A.J.: A random map implementation of
  implicit filters.
\newblock Journal of Computational Physics \textbf{231}(4), 2049--2066 (2012)

\bibitem{morzfeld2015parameter}
Morzfeld, M., Tu, X., Wilkening, J., Chorin, A.: Parameter estimation by
  implicit sampling.
\newblock Communications in Applied Mathematics and Computational Science
  \textbf{10}(2), 205--225 (2015)

\bibitem{Moselhy2011}
Moselhy, T., Marzouk, Y.: Bayesian inference with optimal maps.
\newblock Journal of Computational Physics \textbf{231}(23), 7815--7850 (2012)

\bibitem{Neal2011}
Neal, R.M.: {MCMC} using {Hamiltonian} dynamics.
\newblock In: S.~Brooks, A.~Gelman, G.L. Jones, X.L. Meng (eds.) Handbook of
  Markov chain Monte Carlo, chap.~5, pp. 113--162. Taylor and Francis, Boca
  Raton, FL (2011)

\bibitem{Note1}
See \cite {Moselhy2011} for a discussion on the asymptotic equivalence of the
  K--L divergence and Hellinger distance in the context of transport maps.

\bibitem{Note2}
The lexicographic order on $\real ^\pd $ is defined as follows. For $x,y \in
  \real ^\pd $, we define $x \preceq y$ if and only if either $x = y$ or the
  first nonzero coordinate in $y-x$ is positive \cite {ghorpade2010course}.
  $\preceq $ is a total order on $\real ^\pd $. Thus, we define $T$ to be a
  monotone increasing function if and only if $x \preceq y$ implies $T(x)
  \preceq T(y)$. Notice that monotonicity can be defined with respect to any
  order on $\real ^\pd $ (e.g., $\preceq $ need not be the lexicographic
  order). There is no natural order on $\real ^\pd $ except when $\pd =1$. It
  is easy to verify that for a triangular function $T$, monotonicity with
  respect to the lexicographic order is equivalent to the following: the $k$th
  component of $T$ is a monotone function of the $k$th input variable.

\bibitem{Note3}
Roots can be found using, for instance, Newton's method. When a component of
  the inverse transport is parameterized using polynomials, however, then a
  more robust root-finding approach is to use a bisection method based on Sturm
  sequences (e.g., \cite {Parno2014thesis}).

\bibitem{Parno2014thesis}
Parno, M.: Transport maps for accelerated {Bayesian} computation.
\newblock Ph.D. thesis, Massachusetts Institute of Technology (2014)

\bibitem{Parno2015mcmc}
Parno, M., Marzouk, Y.: Transport map accelerated {M}arkov chain {M}onte
  {C}arlo.
\newblock arXiv:1412.5492  (2014)

\bibitem{Parno2015}
Parno, M., Moselhy, T., Marzouk, Y.: A multiscale strategy for {B}ayesian
  inference using transport maps.
\newblock arXiv:1507.07024  (2015)

\bibitem{ramsay1998estimating}
Ramsay, J.: Estimating smooth monotone functions.
\newblock Journal of the Royal Statistical Society Series B \textbf{60}(2),
  365--375 (1998)

\bibitem{Reich2013}
Reich, S.: A nonparametric ensemble transform method for {Bayesian} inference.
\newblock {SIAM Journal on Scientific Computing} \textbf{35}(4), A2013--A2024
  (2013)

\bibitem{renegar2001mathematical}
Renegar, J.: A mathematical view of interior-point methods in convex
  optimization, vol.~3.
\newblock {SIAM} (2001)

\bibitem{RobertBook2004}
Robert, C.P., Casella, G.: {M}onte {C}arlo statistical methods, 2nd edn.
\newblock Springer (2004)

\bibitem{Rosenblatt1952}
Rosenblatt, M.: Remarks on a multivariate transformation.
\newblock The Annals of Mathematical Statistics \textbf{23}(3), 470--472 (1952)

\bibitem{rosic2012sampling}
Rosi{\'c}, B.V., Litvinenko, A., Pajonk, O., Matthies, H.G.: Sampling-free
  linear {B}ayesian update of polynomial chaos representations.
\newblock Journal of Computational Physics \textbf{231}(17), 5761--5787 (2012)

\bibitem{saad2009characterization}
Saad, G., Ghanem, R.: Characterization of reservoir simulation models using a
  polynomial chaos-based ensemble {K}alman filter.
\newblock Water Resources Research \textbf{45}(4) (2009)

\bibitem{smith2001sequential}
Smith, A., Doucet, A., de~Freitas, N., Gordon, N. (eds.): Sequential Monte
  Carlo methods in practice.
\newblock Springer (2001)

\bibitem{spall2005introduction}
Spall, J.C.: Introduction to stochastic search and optimization: estimation,
  simulation, and control, vol.~65.
\newblock John Wiley \& Sons (2005)

\bibitem{spantini16markov}
Spantini, A., Marzouk, Y.: On the low-dimensional structure of measure
  transports.
\newblock Preprint  (2016)

\bibitem{spantini2014optimal}
Spantini, A., Solonen, A., Cui, T., Martin, J., Tenorio, L., Marzouk, Y.:
  Optimal low-rank approximations of {B}ayesian linear inverse problems.
\newblock SIAM Journal on Scientific Computing \textbf{37}(6), A2451--A2487
  (2015)

\bibitem{Stavropoulou2015}
Stavropoulou, F., M\"{u}ller, J.: Parameterization of random vectors in
  polynomial chaos expansions via optimal transportation.
\newblock {SIAM} Journal on Scientific Computing \textbf{37}(6), A2535--A2557
  (2015)

\bibitem{strang1973analysis}
Strang, G., Fix, G.J.: An analysis of the finite element method, vol. 212.
\newblock Prentice-Hall Englewood Cliffs, NJ (1973)

\bibitem{stuart2010inverse}
Stuart, A.M.: Inverse problems: a {B}ayesian perspective.
\newblock Acta Numerica \textbf{19}, 451--559 (2010)

\bibitem{Sullivan2010}
Sullivan, A.B., Snyder, D.M., Rounds, S.A.: {Controls on biochemical oxygen
  demand in the upper Klamath River, Oregon}.
\newblock Chemical Geology \textbf{269}(1-2), 12--21 (2010)

\bibitem{Tabak2013}
Tabak, E., Turner, C.V.: A family of nonparametric density estimation
  algorithms.
\newblock Communications on Pure and Applied Mathematics \textbf{66}(2),
  145--164 (2013)

\bibitem{tabak2014data}
Tabak, E.G., Trigila, G.: Data-driven optimal transport.
\newblock Communications on Pure and Applied Mathematics \textbf{10}, 1002
  (2014)

\bibitem{thode2002testing}
Thode, H.C.: Testing for normality, vol. 164.
\newblock CRC press (2002)

\bibitem{Villani2003}
Villani, C.: Topics in optimal transportation.
\newblock 58. American Mathematical Soc. (2003)

\bibitem{Villani2009}
Villani, C.: Optimal transport: old and new, vol. 338.
\newblock Springer Science \& Business Media (2008)

\bibitem{wackernagel2013multivariate}
Wackernagel, H.: Multivariate geostatistics: an introduction with applications.
\newblock Springer Science \& Business Media (2013)

\bibitem{wainwright2008graphical}
Wainwright, M.J., Jordan, M.I.: Graphical models, exponential families, and
  variational inference.
\newblock Foundations and Trends in Machine Learning \textbf{1}(1-2), 1--305
  (2008)

\bibitem{wang2015methods}
Wang, L.: Methods in {M}onte {C}arlo computation, astrophysical data analysis
  and hypothesis testing with multiply-imputed data.
\newblock Ph.D. thesis, Harvard University (2015)

\bibitem{wilkinson2011stochastic}
Wilkinson, D.J.: Stochastic modelling for systems biology.
\newblock CRC press (2011)

\bibitem{wright1999numerical}
Wright, S.J., Nocedal, J.: Numerical optimization, vol.~2.
\newblock Springer New York (1999)

\bibitem{Xiu2002}
Xiu, D., Karniadakis, G.: The {W}iener-{A}skey polynomial chaos for stochastic
  differential equations.
\newblock SIAM Journal on Scientific Computing \textbf{24}(2), 619--644 (2002)

\end{thebibliography}

\end{document}